\def\BibTeX{{\rm B\kern-.05em{\sc i\kern-.025em b}\kern-.08em
 T\kern-.1667em\lower.7ex\hbox{E}\kern-.125emX}}
\def\gof{\mbox{${\boldsymbol{P}}\,\#\,\boldsymbol{C}$}}
\newcommand\jw{j\omega}
\newcommand\bbkt[1]{\left\{#1\right\}}
\newcommand\sbkt[1]{\left[#1\right]}
\newcommand\rbkt[1]{\left(#1\right)}
\newcommand\ininf[2]{\langle #1\,, #2 \rangle}
\newcommand\ininfbig[2]{\left\langle #1\,, #2 \right\rangle}
\newcommand\lt{\mathcal{L}_2}
\newcommand\ltp{\mathcal{L}_{2}}
\newcommand\ltep{\mathcal{L}_{2e}}
\newcommand\ccp{\bar{\mathbb{C}}_+}
\newcommand\clp{\bar{\mathbb{C}}_-}
\newcommand\cop{{\mathbb{C}}_+}
\newcommand\sysp{\boldsymbol{P}}
\newcommand\sysg{\boldsymbol{G}}
\newcommand\srg{\mathrm{SRG}}
\newcommand\gra{\mathscr{G}}
\newcommand\sysi{\boldsymbol{I}}
\newcommand\sysc{\boldsymbol{C}}
\newcommand\sysf{\boldsymbol{F}}
\newcommand\rn{{\mathbb{R}}^n}
\newcommand\RI{{\mathrm{I}}}
\newcommand\abs[1]{\left|#1\right|}
\newcommand\rep{{\rm Re}}
\newcommand\imp{{\rm Im}}
\newcommand\dom{{\rm dom}}
\newcommand{\norm}[1]{\left\lVert#1\right\rVert}
\newcommand{\inorm}[1]{\left\lVert#1\right\rVert_{\mathrm{I}}}
\newcommand{\tbt}[4]{\begin{bmatrix}#1&#2\\#3&#4\end{bmatrix}}
\newcommand{\tbo}[2]{\begin{bmatrix}#1\\#2\end{bmatrix}}
\newcommand{\stbt}[4]{\left[\begin{smallmatrix} #1&#2\\#3&#4\end{smallmatrix}\right]}
\newcommand{\stbo}[2]{\left[\begin{smallmatrix} #1\\#2\end{smallmatrix}\right]}
\newcommand{\be}{\begin{equation}}\newcommand{\ee}{\end{equation}}
\newcommand{\bex}{\begin{equation*}}\newcommand{\eex}{\end{equation*}}
\def\goftau{\mbox{${\boldsymbol{P}}\,\#\, \rbkt{\tau\boldsymbol{C}}$}}
\def\fpc{\mbox{$\boldsymbol{F}_{\sysp\# \sysc}$}}
\def\myproof{\noindent\hspace{2em}{\itshape Proof of Theorem~\ref{thm: softSG}: }}
\def\endmyproof{\hspace*{\fill}~\QED\par\endtrivlist\unskip}
\newtheorem{theorem}{Theorem}
\newtheorem{proposition}{Proposition}
\newtheorem{definition}{Definition}
\newtheorem{corollary}{Corollary}
\begin{document}
\title{Soft and Hard Scaled Relative Graphs for \\ Nonlinear Feedback Stability} 
\author{Chao Chen, \IEEEmembership{Member, IEEE}, Sei Zhen Khong, \IEEEmembership{Senior Member, IEEE}, and Rodolphe Sepulchre, \IEEEmembership{Fellow, IEEE} 
\thanks{This work was supported by the European Research Council under the Advanced ERC Grant Agreement SpikyControl n. 101054323. The work of Sei Zhen Khong was supported by the National Science and Technology Council of Taiwan under grants 113-2222-E-110-002-MY3, 114-2622-8-110-00 and 114-2218-E-007-011.} 
\thanks{Chao Chen is with the Department of Electrical and Electronic Engineering, The University of Manchester, Manchester M13 9PL, United Kingdom (e-mail: chao.chen@manchester.ac.uk).}
\thanks{Sei Zhen Khong is with the Department of Electrical Engineering, National Sun Yat-sen University, Kaohsiung 80424, Taiwan (e-mail: szkhong@mail.nsysu.edu.tw).} 
\thanks{Rodolphe Sepulchre is with the Department of Electrical Engineering (ESAT-STADIUS), KU Leuven, Leuven 3001, Belgium, and the Department of Engineering, University of Cambridge, Cambridge CB2~1PZ, United Kingdom (e-mail: rodolphe.sepulchre@kuleuven.be).}}
\maketitle

\begin{abstract}
This article presents input-output stability analysis of nonlinear feedback systems based on the notion of soft and hard scaled relative graphs (SRGs). The soft and hard SRGs acknowledge the distinction between incremental positivity and incremental passivity and reconcile them from a graphical perspective. The essence of our proposed analysis is that the separation of soft SRGs or hard SRGs of two open-loop systems on the complex plane guarantees closed-loop stability. The main results generalize an existing soft SRG separation theorem for bounded open-loop systems which was proved based on interconnection properties of soft SRGs under a chordal assumption. By comparison, our analysis does not require this chordal assumption and applies to possibly unbounded open-loop systems based on their hard SRGs.
\end{abstract}

\begin{IEEEkeywords}
Scaled relative graph, robust stability, graph separation, incremental positivity, incremental passivity.
\end{IEEEkeywords}

\IEEEpeerreviewmaketitle

\section{Introduction}
Graphical tools have been central to the development of feedback control theory. Representative tools include the Bode diagram, Nyquist plot, Nichols chart, Riemann plot and root locus. These tools underlie cornerstone results in linear control analysis and synthesis, among which the Nyquist stability criterion and gain/phase robustness margins are of significant importance for both theoretical and practical use. More particularly, deriving simple graphical conditions on open-loop components to handily determine stability of closed-loop linear time-invariant (LTI) systems is meaningful, e.g., small-gain, small-phase and passivity conditions.

Over the past half-century, Zames' pioneering two-part work \cite{Zames:66, Zames:66_2} on nonlinear feedback input-output stability theory has profoundly influenced research in the systems and control community. Feedback input-output stability problems boil down to boundedness and continuity problems of well-posed feedback systems~\cite{Zames:66}. Historically, boundedness and continuity are alternatively termed finite-gain stability and incremental finite-gain stability~\cite[Sect.~3]{Desoer:75}, respectively. The latter notion is stronger but more practical since it requires that output trajectories of a system must not be critically sensitive to small changes in its input trajectories, and thereby the latter was adopted in \cite{Zames:66} as a more proper stability definition for nonlinear systems. Zames' seminal work \cite{Zames:66} developed three substantial results for continuity of feedback systems: the incremental small-gain, incremental passivity and incremental conicity theorems \cite[Ths.~1-3]{Zames:66}. For single-input single-output (SISO) LTI systems, these results all enjoy clear graphical interpretations embedded in a Nyquist gain/phase setting.

Conceptually, each theorem in \cite[Ths.~1-3]{Zames:66} boils down to a specific form of \emph{graph separation} of two open-loop systems in a topological sense. Here, a system's graph is nothing but an abstract representation of its input-output energy-bounded trajectories in the Hilbert space $\lt$. Topological graph separation has been thoroughly studied and is known to be nearly the most general condition for feedback input-output stability \cite{Teel:11}, yet also the most abstract. The corresponding literature is vast; see, e.g., \cite{Sandberg:64, Safonov:80, Verma:88, Doyle:93, Georgiou:95, Teel:96, Georgiou:97, Hilborne:22} and the references therein. When separation happens to be made in a quadratic form, the approach of integral quadratic constraints~\cite{Megretski:97, Seiler:14, Carrasco:19, Khong:21} offers a more tractable alternative for feedback stability. For a Lur'e system consisting of an LTI component and a static nonlinearity, extensive efforts have been dedicated to making nonlinear feedback analysis tractable and visualizable in the complex plane, e.g., the celebrated circle criterion and Popov criterion \cite{Zames:66_2}, \cite[Sect.~6.6]{Vidyasagar:93}. The circle criterion is regarded as a nonlinear generalization of the Nyquist criterion \cite[Sect.~4]{Zames:66_2}. In addition, the recent notions of nonlinear phase~\cite{Chen:20j} and singular angle \cite{Chen:21_angle} based on numerical ranges represent other notable efforts on extracting graphical information from nonlinear systems.
 
Recently, the notion of \emph{scaled relative graph} (SRG) of nonlinear operators defined on a Hilbert space was introduced in \cite{Ryu:21} for the purpose of analysis of optimization algorithms from a graphical perspective. This notion was later adopted in \cite{Chaffey:21j,Chaffey:22_rolled} for graphical nonlinear systems analysis and feedback continuity analysis. A system's SRG, a collection of \emph{complex scalars}, mixes its gain and phase information from an increment of the system's input-output trajectories into a polar form in $\mathbb{C}$, that is,
\begin{equation*}
 \srg(\sysp) = \left\{z\in\mathbb{C}: \abs{z} = \textrm{gain}(\Delta u, \Delta y), \angle z =\textrm{phase}(\Delta u, \Delta y) \right\},
\end{equation*}
where $\sysp = u\mapsto y$ represents a bounded nonlinear system and $\Delta (\cdot)$ the difference between two trajectories in the $\lt$ space. The SRG analysis enables graphical interpretations of nonlinear systems, which is reminiscent of the classical Nyquist analysis. Typical illustrations \cite[Prop.~1]{Chaffey:21j} include that the SRG of an incrementally positive system is contained in a closed right half-plane and that of an incrementally gain-bounded system lies in a closed disk. A new separation result for feedback continuity analysis has been proposed in \cite[Th.~2]{Chaffey:21j} via the use of SRGs under certain chordal conditions, thereby endowing the incremental small-gain and positivity theorems \cite{Zames:66} graphical understandings akin to those from the Nyquist plot viewpoint.

In this paper, motivated by \cite{Zames:66} and \cite{Chaffey:21j}, we investigate feedback input-output stability based on the notion of SRG and aim at a self-contained story for SRG separation in both $\lt$ space and $\ltep$ space. The essence of our proposed results can be distilled into one sentence: \emph{Separation of the SRGs of two open-loop systems in $\mathbb{C}$ implies closed-loop stability or continuity}. The proposed results generalize the existing SRG separation theorem \cite[Th.~2]{Chaffey:21j} in the following sense: (i) the chordal assumption used for SRG over-approximation and interconnection rules is removed and (ii) possibly unbounded open-loop systems are allowed. Our proof is \emph{self-contained} in the sense that it only exploits systems' input-output trajectories classified into three trajectory-wise cases, that is, incremental small-gain, incremental large-gain and incremental small-phase cases. 

To enrich the SRG-based systems theory, we further develop an $\ltep$-framework by proposing the so-called \emph{hard} SRG with respect to trajectories lying in $\ltep$. The $\ltep$ space is an extension of $\lt$, in which the integrals are taken from time $0$ to any finite time $T>0$. The hard-type definition stands in contrast to the original \emph{soft} SRG definition in terms of $\lt$-trajectories, where the integrals are taken from time $0$ to $\infty$. In doing so, the proposed soft and hard SRGs can respectively recover to the two existing definitions of incremental positivity and incremental passivity \cite[Ch.~VI]{Desoer:75}. A hard-type definition in general is much stronger than its corresponding soft-type. The main benefit of a hard-type definition is that the corresponding graph separation result for feedback stability is more straightforward and easier to establish when compared with its soft-type counterpart, since the latter often requires additional \emph{homotopy} arguments. Such a distinction is acknowledged throughout our proposed results. Apart from incremental passivity, other hard-type input-output notions include the hard IQC \cite{Megretski:97, Seiler:14,Carrasco:19, Khong:21} and $\mathcal{L}_{2e}$-singular angle \cite{Chen:21_angle}. Hard-type notions are more common in state-space control theory, particularly in dissipativity theory \cite{Willems:72, Hill:80}. Dissipation inequalities are mostly of the hard type: incremental dissipativity \cite{Stan:07, Sepulchre:22_incremental, Verhoek:23, Liu:23}, differential dissipativity \cite{Forni:18}, and dynamic dissipativity \cite{Khong:22}, to name a few.

The remainder of this paper is structured as follows. In Section~\ref{sec:preliminaries}, notation and preliminaries on signals and systems are provided. In Section~\ref{sec: srg}, we propose soft SRGs and hard SRGs for nonlinear systems and build their links with incremental positivity and incremental passivity. In Section~\ref{sec: main}, we establish the main results of this paper, novel conditions for feedback stability analysis via separation of soft and hard SRGs. A comparison is also made between the proposed results and the existing soft SRG separation result. Section~\ref{sec:conclusion} concludes this paper.
 
\section{Notation and Preliminaries}\label{sec:preliminaries}

Let $\mathbb{F}=\mathbb{R}$ or $\mathbb{C}$ be the field of real or complex numbers and $\mathbb{F}^n$ be the linear space of $n$-dimensional vectors over $\mathbb{F}$. Let $\bar{\mathbb{C}}\coloneqq \mathbb{C}\cup \{\infty\}$, $\bar{\mathbb{C}}_+$ be the extended complex right half-plane, and $\mathbb{R}_+$ be the nonnegative real axis. For $x, y\in \mathbb{F}^n$, denote $\ininf{x}{y}$ and $\abs{x}\coloneqq \sqrt{\ininf{x}{x}}$ as the Euclidean inner product and norm, respectively. The complex conjugate, transpose and conjugate transpose of matrices are denoted by $\overline{(\cdot)}$, $(\cdot)^\top$ and $(\cdot)^*$, respectively. Let $I$ denote the identity matrix of appropriate dimensions. The real and imaginary parts of $z\in \mathbb{C}$ are denoted by $\rep\rbkt{z}$ and $\imp \rbkt{z}$, respectively. The angle of $z\in \mathbb{C} \setminus\{0\}$ in the polar form $\abs{z}e^{j\angle z}$ is denoted by $\angle z$. If $z=0$ or $z=\infty$, $\angle z$ is undefined. The closure of a set $\mathcal{S}$ is denoted by $\mathrm{cl}~\mathcal{S}$.

Denote the Lebesgue space of all energy-bounded $\rn$-valued signals by
\bex
\textstyle \lt^n \coloneqq \left\{u\colon \mathbb{R}_+\to \rn \Big| \norm{u}_2^2 = \ininf{u}{u} \coloneqq \int_{0}^{\infty} |u(t)|^2\,\mathrm{d}t < \infty\right\},
\eex 
where the superscript $n$ is dropped when the dimension is clear from the context. For $u\in \lt$, throughout this paper the simplified notation $u=0$ is adopted for $u\overset{\text{a.e.}}{=}0$, i.e., zero almost everywhere, and all integrals are taken in the Lebesgue sense. For $T\geq 0$, define the truncation operator $\boldsymbol{\Gamma}_T$ on a signal $u\colon \mathbb{R}_+ \rightarrow \rn$ to be $ (\boldsymbol{\Gamma}_T u)(t)= u(t)$ when $t\leq T$, and $(\boldsymbol{\Gamma}_T u)(t)=0$ when $t> T$. For notational simplicity, we adopt $u_T \coloneqq \boldsymbol{\Gamma}_T u$ for any $T\geq 0$ when there is no ambiguity. Denote the extended $\lt$-space by 
\bex
\ltep \coloneqq \left\{ u\colon \mathbb{R}_+ \to \rn \mid u_T \in \lt~\forall T\geq 0 \right\}
\eex
and define the semi-inner product $\ininf{u}{v}_T \coloneqq \ininf{u_T}{v_T}$ for $u \in \ltep$, $v\in \ltep$ and $T\geq 0$. 

Given $u, v\in \ltp$, define the \emph{phase} $\theta(u, v)\in \interval{0}{\pi}$ from $u$ to $v$ by
 \be\label{eq: def_phase}
 \theta(u, v)\coloneqq \arccos \frac{\ininf{u}{v}}{\norm{u}_2\norm{v}_2}
 \ee 
 if $u,v\neq 0 $, and $\theta(u, v)\coloneqq 0$, otherwise. In parallel, define the \emph{gain} $\gamma(u, v)\in \interval{0}{\infty}$ from $u$ to $v$ by
\be\label{eq: def_gain} 
\gamma(u, v)\coloneqq \frac{\norm{v}_2}{\norm{u}_2}
\ee 
if $u \neq 0$, and $\gamma(u, v)\coloneqq \infty$, otherwise. Note that $\theta(v, u)=\theta(u, v)$ and $\gamma(v, u)=1/\gamma(u, v)$ for $u, v\neq 0$. Similarly, for signals $u, v\in \ltep$ and $T>0$, we have $u_T, v_T\in \ltp$ and denote by \begin{align*}
 \theta_T(u, v)\coloneqq \theta(u_T, v_T) \text{ and } \gamma_T(u, v)\coloneqq \gamma(u_T, v_T).
\end{align*}

An operator $\sysp\colon \ltep\to\ltep$ is said to be causal if $\boldsymbol{\Gamma}_T\sysp=\boldsymbol{\Gamma}_T\sysp\boldsymbol{\Gamma}_T$ for all $T\geq 0$. We view a system as a causal operator mapping input signals to output signals. For simplicity, it is assumed that an operator maps the zero signal to the zero signal\footnote[1]{If the assumption is not satisfied due to, e.g., nonzero initial conditions, then compensating bias terms can be added into a feedback loop; see~\cite[Sect.~2]{Zames:66}. Moreover, the assumption can be removed when only incremental properties are under consideration.}, i.e., $\sysp0=0$. Without loss of generality (WLOG), only ``square'' systems with the same number of inputs and outputs are considered, as a system may always be patched with zeros to make it ``square''. Further, these systems are assumed to be nonzero, i.e., $\sysp\neq 0$. The $\lt$-domain of $\sysp$, i.e., the set of all its input signals in $\ltp$ such that the output signals are in $\ltp$, is denoted by $\mathrm{dom}(\sysp)\coloneqq \left\{ u\in \ltp \mid \sysp u \in \ltp \right\}$.
 
Throughout this paper, we will analyze the following type of stability of an open-loop system in the finite-gain sense {\cite[Sec~2.4]{Zames:66}}.
\begin{definition}[Open-loop stability]\label{def: stability}
A causal system $\sysp:\ltep \to \ltep$ is said to be 
\textit{bounded} if $\mathrm{dom}(\sysp) = \ltp$ and 
\bex 
\norm{\sysp} \coloneqq \sup_{0\neq u \in \ltp} \gamma\rbkt{u, \sysp u} < \infty.
\eex 
 Moreover, $\sysp$ is called \emph{stable} (a.k.a. continuous) if it is bounded and 
\bex 
 \norm{\sysp}_{\RI} \coloneqq \sup_{\substack{u_1,u_2 \in \ltp, u_1\neq u_2}} \gamma\rbkt{u_1-u_2, \sysp u_1 - \sysp u_2} < \infty. 
\eex
\end{definition}

In Definition~\ref{def: stability}, we follow the terminology employed in \cite{Zames:66} that a system's stability is defined by its (Lipschitz) continuity; that is, its output increment is not critically sensitive to a small change of its input. In addition, $\inorm{\sysp}$ is called the \emph{incremental gain} (a.k.a. Lipschitz gain) of $\sysp$. It is noteworthy that the incremental gain of a causal bounded system $\sysp$ can be equivalently obtained from the $\ltep$ space. By \cite[Prop.~1.2.3]{Van:17},
 it holds that \bex \inorm{\sysp} = \sup_{\substack{u_1,u_2 \in \ltep, T>0, \norm{(u_1-u_2)_T}_2\neq 0}} \gamma_T\rbkt{u_1 - u_2, \sysp u_1 - \sysp u_2}.\eex 
 
The \emph{graph} of a causal system $\sysp:\ltep\to\ltep$ is defined by $\gra(\sysp)\coloneqq \left\{\stbo{u}{\sysp u} \in \ltp\times \ltp \right\}$.
The \emph{inverse graph} of $\sysp$ is defined as $\gra^{\dagger}(\sysp)\coloneqq \left\{\stbo{\sysp u}{u} \in \ltp \times \ltp \right\}$ swapping the order of inputs and outputs. Likewise, the \emph{extended graph} and the extended inverse graph of $\sysp$ are defined by $\gra_e(\sysp)\coloneqq \left\{\stbo{u}{\sysp u} \in \ltep \times \ltep \right\}$ and $\gra^{\dagger}_e(\sysp)\coloneqq \left\{\stbo{\sysp u}{u} \in \ltep \times \ltep \right\}$, respectively. 

\section{Soft and Hard Scaled Relative Graphs}\label{sec: srg}

In classical control theory, it is well known that the Nyquist plot of a gain-bounded SISO LTI system is contained in a closed disk and that of a passive SISO LTI system is contained in a closed right-half plane. The concept of SRG \cite{Chaffey:21j} generalizes such graphical interpretations to nonlinear systems, which was originally introduced with its soft-form tested over the $\lt$ space \cite{Ryu:21, Chaffey:21j}. In this section, we introduce a parallel notion, the hard-form of SRG tested over the $\ltep$ space. We then acknowledge the distinction between incremental positivity and incremental passivity made in \cite[p.~174]{Desoer:75} in terms of the difference between soft SRGs and hard SRGs. We begin with presenting soft SRGs and then introduce hard SRGs. Let $\Delta(\cdot)$ denote the difference between two trajectories labeled by $(\cdot)_1$ and $(\cdot)_2$, e.g., $\Delta u \coloneqq u_1 - u_2$. 

\subsection{Soft and Hard SRGs}
For a causal system $\sysp\colon \ltep \to \ltep$, define the \textit{soft} SRG of $\sysp$ by
\begin{multline}\label{eq: def_soft_srg}
 \srg(\sysp)\coloneqq \left\{\gamma(\Delta u, \Delta y) e^{\pm j \theta(\Delta u, \Delta y)}\Big|\tbo{u_1}{y_1}\in \gra(\sysp), \right. \\
 \left. \vphantom{\tbo{u_2}{y_2}} \tbo{u_2}{y_2} \in \gra(\sysp), \Delta u \neq 0, \Delta y \neq 0 \right\}
 \end{multline}
as a subset of the extended complex plane $\bar{\mathbb{C}}$, with $\theta(\cdot, \cdot)$ defined in \eqref{eq: def_phase} and $\gamma(\cdot, \cdot)$ defined in \eqref{eq: def_gain}. Each pair of input-output incremental trajectories of $\sysp$ can generate a complex scalar whose magnitude and argument contain incremental gain and incremental phase information of $\sysp$, respectively. Since $\theta(\cdot, \cdot)$ takes values in $\interval{0}{\pi}$, $\srg(\sysp)$ contains the term $e^{\pm j \theta(\cdot, \cdot)}$ which makes itself symmetric about the real axis, similarly to the classical Nyquist plot. A signed version of the soft SRG has been recently investigated in \cite{VanDenEijnden:25}. Furthermore, the definition of $\srg(\sysp)$ in \eqref{eq: def_soft_srg} is based on the graph $\gra(\sysp)$ or $u\in \dom(\sysp)$, thereby also encompassing possibly unbounded systems with $\dom(\sysp)$ being a proper subset of $\lt$. Notice that soft SRGs are not necessarily closed, e.g., see \cite[Ex.~1]{Chaffey:24}. 

The \emph{inverse soft} SRG of $\sysp$ is defined as
\begin{equation}\label{eq: def_inv_soft_srg}
\begin{aligned}
 \srg^{\dagger}(\sysp)\coloneqq \left\{\gamma(\Delta y, \Delta u) e^{\pm j \theta(\Delta y, \Delta u)} \Big|\tbo{y_1}{u_1} \in \gra^{\dagger}(\sysp), 
\right. \\
 \left. \vphantom{\tbo{y_2}{u_2}} \tbo{y_2}{u_2} \in \gra^\dagger(\sysp), \Delta u\neq 0, \Delta y\neq 0 \right\}.
 \end{aligned} 
\end{equation}
In comparison to $\srg(\sysp)$, the above $\srg^{\dagger}(\sysp)$ exploits the inverse graph $\gra^{\dagger}(\sysp)$ by \emph{swapping} the role of input and output of $\sysp$. We refer the reader to \cite[Prop.~1]{Chaffey:21j} for illustrations of soft SRGs of some typical bounded systems and to \cite[Th.~4]{Chaffey:21j} for the connection between soft SRGs and Nyquist plots of SISO LTI systems. 
 
The definition in~\eqref{eq: def_soft_srg} is with respect to $\lt$-trajectories lying in $\gra(\sysp)$ which is classified as a \textit{soft}-type. By contrast, a \textit{hard}-type definition can be proposed by using $\ltep$-trajectories contained in the extended graph $\gra_e(\sysp)$. The terminology of soft and hard is borrowed from the theory of soft and hard IQCs \cite{Megretski:97, Carrasco:19, Khong:21}. Specifically, the \emph{hard} SRG of a causal system $\sysp:\ltep \to \ltep$ is defined by
\begin{multline}\label{eq: def_hard_srg}
 \srg_e(\sysp)\coloneqq \left\{\gamma_T(\Delta u, \Delta y) e^{\pm j \theta_T(\Delta u, \Delta y)} \Big| \tbo{u_1}{y_1} \in \gra_e(\sysp), \right.\\
 \left. \tbo{u_2}{y_2} \in \gra_e(\sysp), (\Delta u)_T \neq 0, (\Delta y)_T \neq 0, T>0 \vphantom{e^{j\theta}} \right\}.
 \end{multline}

The \textit{inverse hard} SRG of $\sysp$, $\srg^{\dagger}_e(\sysp)$, can be defined in a similar manner to \eqref{eq: def_inv_soft_srg}, based on the \emph{extended inverse graph} $\gra_e^{\dagger}(\sysp)$. It is worth noting that definition~\eqref{eq: def_hard_srg} can also accommodate possibly unbounded systems like an LTI integrator with transfer function $\frac{1}{s}I$. One of our motivations for placing the soft SRG \eqref{eq: def_soft_srg} and the hard SRG \eqref{eq: def_hard_srg} on the same footing is to recover two existing notions, incremental positivity and incremental passivity, respectively, as explained later. 

\begin{table}[htb]
 \begin{center}
 \begin{tabular}{ccc}
 \toprule[1pt] & Soft $\srg(\sysp)$ & Hard $\srg_e(\sysp)$ \\
 \specialrule{0em}{2pt}{2pt}
 \hline
 \specialrule{0em}{2pt}{2pt}
 Trajectories & $\lt$ & $\ltep$ \\
 \specialrule{0em}{2pt}{2pt}
 Connections & Incremental positivity & Incremental passivity \\
 \bottomrule[1pt]
 \end{tabular}
 \end{center}
 \caption{Soft and hard SRGs of a causal system $\sysp$.}\label{tab:graph}
\end{table}

In summary, for a causal system $\sysp$, two types of scaled relative graphs have been defined, as illustrated by Table~\ref{tab:graph}.
 
\subsection{Characterizations of Incremental Positivity and Incremental Passivity via Soft and Hard SRGs}\label{sec: incrementalpassivity}
The theory of positivity and passivity~\cite[Ch.~VI]{Desoer:75} has been one of the cornerstones of input-output nonlinear control theory. We present two existing definitions~\cite[Sect.~VI.4]{Desoer:75} to characterize incremental positivity or incremental passivity of a system. 
 \begin{definition}[Positivity and passivity]\label{def: passive}
 A causal system $\sysp= u\mapsto y:\ltep\to\ltep$ is said to be 
\begin{enumerate}
 \renewcommand{\theenumi}{\textup{(\roman{enumi})}}\renewcommand{\labelenumi}{\theenumi}
 \item \label{item: l2_inc_passive_def} \emph{incrementally positive} if 
 \be \label{eq: l2_inc_passive_def}
 \ininf{\Delta u}{\Delta y}\geq 0 \quad \forall u_1, u_2\in \dom(\sysp);
 \ee 
 \item \label{item: l2e_inc_passive_def} \emph{incrementally passive} if 
 \be \label{eq: l2e_inc_passive_def}
 \ininf{(\Delta u)_T}{(\Delta y)_T}\geq 0 \quad \forall T>0, u_1, u_2\in \ltep.
 \ee
 Furthermore, $\sysp$ is said to be 
 \item \label{item: l2_inc_spassive_def} \emph{strictly incrementally positive} if there exist $\delta, \epsilon>0$ such that 
\be \label{eq: l2_inc_spassive_def}
\ininf{\Delta u}{ \Delta y}\geq \delta \norm{\Delta u}_2^2 + \epsilon \norm{\Delta y}_2^2 \quad \forall u_1, u_2\in \dom(\sysp);
\ee
 \item \label{item: l2e_inc_spassive_def} \emph{strictly incrementally passive} if there exist $\delta, \epsilon>0$ such that 
 \be\label{eq: l2e_inc_spassive_def}
 \ininf{(\Delta u)_T}{(\Delta y)_T}\geq \delta \norm{(\Delta u)_T}_2^2 + \epsilon \norm{(\Delta y)_T}_2^2
 \ee for all $T>0$ and $u_1, u_2\in \ltep$.
\end{enumerate}
 \end{definition}

As per Definition~\ref{def: passive}, the positivity is tested over $\lt$ while the passivity over $\ltep$ \cite[Ch.~VI]{Desoer:75}.
We now reformulate the difference between the incremental positivity and incremental passivity based on soft and hard SRGs. Explicitly, an equivalent characterization of Definition~\ref{def: passive}\ref{item: l2_inc_passive_def}-\ref{item: l2e_inc_passive_def} can be given as follows:
\begin{align*}
 \eqref{eq: l2_inc_passive_def} \Leftrightarrow \srg(\sysp)\subset \ccp\setminus\{0\} \text{ and } \eqref{eq: l2e_inc_passive_def} \Leftrightarrow\srg_e(\sysp)\subset \ccp\setminus\{0\}.
\end{align*} 
The strict version in Definition~\ref{def: passive}\ref{item: l2_inc_spassive_def}-\ref{item: l2e_inc_spassive_def} by intuition should be linked with a certain SRG in an open right half-plane $\cop$. This is indeed correct and can be depicted more precisely as follows. Given $\epsilon, \delta>0$, we define the truncated disk sector $\mathcal{D}(\delta, \epsilon)$ shown in Fig.~\ref{fig:passive_phase} by
\begin{equation}\label{eq: sectored_disk}
\begin{aligned}
\mathcal{D}(\delta, \epsilon)\coloneqq \{z\in \cop \mid \abs{\angle z}\leq \arccos 2\sqrt{\delta \epsilon}, 
\abs{z} \in \interval[open left]{0}{\textstyle \frac{1}{\epsilon}}, \rep(z)\geq \delta\}.
\end{aligned}
 \end{equation}
 
 \begin{figure}[htb]
 \vspace{-2mm}
\begin{center}
 \begin{tikzpicture}
 \pgfdeclareverticalshading{cone}{100bp}{
 color(0bp)=(gray!50); 
 color(100bp)=(white)}
 \draw[->] (-1,0) -- (3.2,0) node[right] {$\rep$}; 
 \draw[->] (0,-1.2) -- (0,1.5) node[above] {$\imp$}; 
 \fill[gray!50] (0.2,0.1) -- (2,1) arc[start angle = 30, end angle = -30, radius = 2cm] -- cycle;
 \fill[gray!50] (0.2,-0.1) -- (2,1) arc[start angle = 30, end angle = -30, radius = 2cm] -- cycle;
 \fill[gray!50] (0.2,-0.1) rectangle (0.4,0.1);
 \draw [dashed](0.8,0) arc[start angle=0, end angle=30, radius=0.8];
 \node at (1, 0.2) {$\theta$};
 \draw (0.2,0.1) -- (2,1);
 \draw (0.2,-0.1) -- (2,-1);
 \draw (0.2,0.1) -- (0.2,-0.1);
 \draw (2,-1) arc[start angle=-30, end angle = 30, radius = 2cm];
 \node at (2.7, -0.3) {${1}/{\epsilon}$};
 \node at (2.2, 1.5) { $\theta = \arccos 2\sqrt{\delta \epsilon}$}; 
 \node at (0.2, 0.3) {$\delta$}; 
\node at (1.6, -0.3) {$\mathcal{D}(\delta, \epsilon)$}; 
 \end{tikzpicture}
\end{center}
\caption{An outer bound (i.e., the gray region $\mathcal{D}(\delta, \epsilon)$) of the soft SRG of a strictly incrementally positive system with indices $\delta, \epsilon>0$.} \label{fig:passive_phase}
 \end{figure}

\begin{proposition}\label{prop:strictpassive}
 For a strictly incrementally positive system $\sysp$ in \eqref{eq: l2_inc_spassive_def} with $\delta, \epsilon>0$, it holds that 
 $\srg(\sysp) \subset \mathcal{D}(\delta, \epsilon)$. Similarly, 
for a strictly incrementally passive system $\sysp$ in \eqref{eq: l2e_inc_spassive_def} with $\delta, \epsilon>0$, it holds that
 $\srg_e(\sysp) \subset \mathcal{D}(\delta, \epsilon)$.
\end{proposition}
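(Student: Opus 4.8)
The plan is to take an arbitrary point of $\srg(\sysp)$ and verify directly that it meets the three defining constraints of $\mathcal{D}$. By definition such a point has the form $z=\gamma(\Delta u,\Delta y)\,e^{\pm j\theta(\Delta u,\Delta y)}$ for some admissible increment with $\Delta u\neq 0$ and $\Delta y\neq 0$. To keep the bookkeeping light I would abbreviate $a\coloneqq\norm{\Delta u}_2>0$, $b\coloneqq\norm{\Delta y}_2>0$ and $p\coloneqq\ininf{\Delta u}{\Delta y}$, so that the polar data of $z$ read $\abs{z}=b/a$, $\cos\angle z=p/(ab)$, and hence $\rep(z)=p/a^2$. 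The strict incremental positivity hypothesis \eqref{eq: l2_inc_spassive_def} then collapses to the single scalar inequality $p\geq \delta a^2+\epsilon b^2$, which I would combine with the Cauchy--Schwarz bound $p\leq ab$ throughout.

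First, for the real part, dividing the positivity inequality by $a^2$ gives $\rep(z)=p/a^2\geq \delta+\epsilon(b/a)^2\geq\delta$, which simultaneously places $z$ in $\cop$ and secures the constraint $\rep(z)\geq\delta$. Next, for the magnitude, I would discard the $\delta a^2$ term to get $p\geq\epsilon b^2$; comparing this with $p\leq ab$ and cancelling $b>0$ yields $a\geq\epsilon b$, i.e.\ $\abs{z}=b/a\leq 1/\epsilon$, while $b>0$ keeps the magnitude strictly positive, so $\abs{z}\in\interval[open left]{0}{1/\epsilon}$.

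For the angular constraint, the key step is the arithmetic--geometric mean inequality $\delta a^2+\epsilon b^2\geq 2\sqrt{\delta\epsilon}\,ab$. Chaining it with the hypothesis gives $p\geq 2\sqrt{\delta\epsilon}\,ab$, whence $\cos\angle z=p/(ab)\geq 2\sqrt{\delta\epsilon}$; since $\theta(\cdot,\cdot)$ takes values in $\interval{0}{\pi}$, on which the cosine is monotone, this is equivalent to $\abs{\angle z}\leq\arccos 2\sqrt{\delta\epsilon}$. I would also flag explicitly here that the same chain, read together with $p\leq ab$, forces $2\sqrt{\delta\epsilon}\leq 1$, so that $\arccos 2\sqrt{\delta\epsilon}$ is well defined rather than tacitly assumed. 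This settles $\srg(\sysp)\subset\mathcal{D}$.

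Finally, for the hard SRG claim I expect the argument to transfer verbatim: replacing $\norm{\cdot}_2$ by the truncated norm and $\ininf{\cdot}{\cdot}$ by the semi-inner product $\ininf{\cdot}{\cdot}_T$, a generic point of $\srg_e(\sysp)$ admits the same polar description with $a=\norm{(\Delta u)_T}_2$, $b=\norm{(\Delta y)_T}_2$ and $p=\ininf{(\Delta u)_T}{(\Delta y)_T}$. The only tools used above, Cauchy--Schwarz and AM--GM, hold equally for the semi-inner product on $\ltep$, and the strict incremental passivity inequality \eqref{eq: l2e_inc_spassive_def} supplies exactly $p\geq\delta a^2+\epsilon b^2$ for each $T>0$. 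There is no serious obstacle in this proposition; the only points demanding genuine care are the well-definedness of the arccosine and keeping track of which inequalities rely on $\Delta u$ versus $\Delta y$ being nonzero.
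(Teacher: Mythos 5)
Your argument is correct and follows essentially the same route as the paper's own proof: the same three constraints of $\mathcal{D}$ are verified one by one, using the AM--GM bound $\delta a^2+\epsilon b^2\geq 2\sqrt{\delta\epsilon}\,ab$ for the angle, the comparison of $p\geq\epsilon b^2$ with Cauchy--Schwarz for the magnitude, and $p\geq\delta a^2$ for the real part, with the hard case obtained by the identical computation on truncations. Your explicit remark that the hypothesis forces $2\sqrt{\delta\epsilon}\leq 1$, so that the arccosine in the definition of $\mathcal{D}$ is well defined, is a small point the paper leaves implicit.
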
 
 \begin{proof}
For brevity, we only establish the proof for $\srg(\sysp) \subset \mathcal{D}(\delta, \epsilon)$. Firstly, it follows from \eqref{eq: l2_inc_spassive_def} that for all $u_1, u_2\in \dom(\sysp)$ such that $\Delta u\neq 0$ and $\Delta y \neq 0$, we have
 \begin{align*}
 \frac{\ininf{\Delta u}{\Delta y}}{\norm{\Delta u}_2\norm{\Delta y}_2} \geq \frac{\delta \norm{\Delta u}_2^2 + \epsilon \norm{\Delta y}_2^2}{\norm{\Delta u}_2\norm{\Delta y}_2} \geq 2\sqrt{\delta \epsilon},
 \end{align*}
 where the last inequality uses that $a^2+b^2\geq 2ab$ for any $a, b>0$.
 This implies that $\srg(\sysp)\subset \{z\in \ccp\mid \abs{\angle z} \leq \arccos 2\sqrt{\delta \epsilon}<\pi/2, z\neq 0 \}.$
 Secondly, for all $u_1, u_2\in \dom(\sysp)$, by \eqref{eq: l2_inc_spassive_def} we obtain
$\epsilon \norm{\Delta y}_2^2 \leq \ininf{\Delta u}{\Delta y} \leq \norm{\Delta u}_2\norm{\Delta y}_2.$
 Together with $\Delta u\neq 0$ and $\Delta y \neq 0$, we then have 
 \bex 
 0<\frac{\norm{\Delta y}_2}{\norm{\Delta u}_2} \leq \frac{1}{\epsilon} \iff \srg(\sysp)\subset \{z\in \mathbb{C}\mid \abs{z}\in \interval[open left]{0}{1/\epsilon}\}.
 \eex 
Thirdly, by \eqref{eq: l2_inc_spassive_def} we also obtain that $\ininf{\Delta u}{ \Delta y}\geq \delta \norm{\Delta u}_2^2$ which is equivalent to $\srg(\sysp)\subset \{ z\in \ccp\mid \rep(z)\geq\delta\}$ by \eqref{eq: def_soft_srg}. Clearly, $\srg(\sysp)$ must be contained in the intersection of the three regions, i.e., $\srg(\sysp) \subset \mathcal{D}(\delta, \epsilon)$ in \eqref{eq: sectored_disk}. 
\end{proof}

\subsection{Connections Between Soft and Hard SRGs}\label{section: soft_hard_connection}

For a causal bounded system $\sysp$, it is known from \cite[p.~200]{Desoer:75} that the incremental positivity over $\lt$ in \eqref{eq: l2_inc_passive_def} and incremental passivity over $\ltep$ in \eqref{eq: l2e_inc_passive_def} are equivalent. We have also seen in Section~\ref{sec:preliminaries} that $\inorm{\sysp}$ can be equivalently defined on $\lt$ or $\ltep$. This inspires us to explore a connection between soft SRGs and hard SRGs, which is a lot \emph{trickier}. It is easy to conclude the following direction:

 \begin{proposition}\label{prop: soft_hard}
 For a causal system $\sysp:\ltep \to \ltep$, it holds that \bex \srg(\sysp)\subset \mathrm{cl}~\srg_e(\sysp).\eex
 \end{proposition}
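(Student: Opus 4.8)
The plan is to show that every point of $\srg(\sysp)$ is a limit of points of $\srg_e(\sysp)$, i.e.\ a point in $\mathrm{cl}~\srg_e(\sysp)$. Fix $z\in\srg(\sysp)$. By definition~\eqref{eq: def_soft_srg} there exist $\tbo{u_1}{y_1},\tbo{u_2}{y_2}\in\gra(\sysp)$ with $\Delta u\neq 0$ and $\Delta y\neq 0$ in $\ltp$, and one choice of sign, such that $z=\gamma(\Delta u,\Delta y)\,e^{\pm j\theta(\Delta u,\Delta y)}$. Since $\ltp\subset\ltep$, the very same trajectories satisfy $\tbo{u_i}{y_i}\in\gra_e(\sysp)$, so they are admissible in the hard definition~\eqref{eq: def_hard_srg}. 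The idea is therefore to truncate these fixed $\lt$-trajectories at time $T$, read off the resulting hard SRG points, and recover $z$ by letting $T\to\infty$.

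The main work is three elementary convergences, all driven by $\Delta u,\Delta y\in\ltp$. First, $\norm{(\Delta u)_T}_2^2=\int_0^T\abs{\Delta u(t)}^2\,dt\to\norm{\Delta u}_2^2>0$ and likewise $\norm{(\Delta y)_T}_2\to\norm{\Delta y}_2>0$ as $T\to\infty$, since these are monotone in $T$ and bounded by the finite full-horizon norms. Second, using the pointwise Cauchy--Schwarz bound $\abs{\ininf{\Delta u(t)}{\Delta y(t)}}\leq\tfrac12(\abs{\Delta u(t)}^2+\abs{\Delta y(t)}^2)\in\ltp^1$, dominated convergence gives $\ininf{\Delta u}{\Delta y}_T=\ininf{(\Delta u)_T}{(\Delta y)_T}\to\ininf{\Delta u}{\Delta y}$. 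Because the two limiting norms are strictly positive, there is $T_0>0$ with $(\Delta u)_T\neq 0$ and $(\Delta y)_T\neq 0$ for all $T\geq T_0$; for such $T$ the point
\[
z_T\coloneqq \gamma_T(\Delta u,\Delta y)\,e^{\pm j\theta_T(\Delta u,\Delta y)}
\]
(same sign as $z$) is well-defined and lies in $\srg_e(\sysp)$. The denominators being bounded away from $0$ along the tail, division is continuous there, so $\gamma_T(\Delta u,\Delta y)\to\gamma(\Delta u,\Delta y)$; and since $\cos\theta_T(\Delta u,\Delta y)\to\cos\theta(\Delta u,\Delta y)$ with $\arccos$ continuous on $\interval{-1}{1}$, also $\theta_T(\Delta u,\Delta y)\to\theta(\Delta u,\Delta y)$. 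Hence $z_T\to z$, giving $z\in\mathrm{cl}~\srg_e(\sysp)$.

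I expect no serious obstacle here; the only substantive points are justifying convergence of the truncated inner product (where the $\ltp$ integrability bound and dominated convergence enter) and guaranteeing that $(\Delta u)_T$, $(\Delta y)_T$ stay nonzero for large $T$ so that $\gamma_T,\theta_T$ are continuous functions of the converging data. Both follow directly from $\Delta u,\Delta y\in\ltp\setminus\{0\}$. It is worth noting that the inclusion is genuinely one-directional: a general point of $\srg_e(\sysp)$ may be generated by trajectories lying in $\ltep\setminus\ltp$ with no finite-energy counterpart, so the reverse inclusion need not hold and only the closure of the hard SRG is recovered.
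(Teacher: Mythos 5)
Your proposal is correct and follows essentially the same route as the paper: the paper's proof simply notes that $\lt$-trajectories lie in $\gra_e(\sysp)$ and that the corresponding hard-SRG points approach the soft-SRG point as $T\to\infty$, which is exactly the limit you carry out. Your write-up merely supplies the convergence details (monotone convergence of the truncated norms, dominated convergence of the truncated inner product, eventual nonvanishing of the denominators) that the paper leaves implicit.
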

 \begin{proof}
For a causal system $\sysp$, any trajectory $\stbo{u}{y}\in \gra(\sysp)$ must be a trajectory in $\gra_e(\sysp)$ as $u, y\in \ltep$. Thus, the complex number $z\in \srg(\sysp)$ generated by $\stbo{u_1}{y_1}, \stbo{u_2}{y_2}\in \gra(\sysp)$ is also a point in the closure of $\srg_e(\sysp)$ as $T\to \infty$ in~\eqref{eq: def_hard_srg}. 
 \end{proof}
 
In general, $\srg(\sysp)$ is only a \emph{proper} subset of $\mathrm{cl}~\srg_e(\sysp)$. To acknowledge this, consider an LTI integrator $\sysp$ with transfer function $\frac{1}{s}$ whose domain can be characterized as $\dom(\sysp)=\frac{s}{s+1}\mathcal{L}_2$. The soft SRG of $\sysp$ is straightforward to obtain. For all $u\in \dom(\sysp)$, according to the input-output mapping $y(t)=\int_0^t u(\tau) \,\mathrm{d}\tau$ and by the strict causality of $\sysp$, we have 
\begin{align*}
 \ininf{u}{y}&=\int_0^{\infty} u(t) y(t) \, \mathrm{d}t = \lim_{T\to \infty}\int_{y(0)}^{y(T)} y \, \mathrm{d}y\\
 &=\frac{1}{2} \lim_{T\to\infty} \abs{y(T)}^2 - \frac{1}{2}\abs{y(0)}^2=0,
\end{align*}
where the last equality follows from $y(0)=0$ and that $y(T) \to 0$ as $T\to \infty$ since $u, y\in \lt$. Hence, we obtain $\srg(\sysp)=j\mathbb{R} \setminus\{0\}$.
The hard SRG is however markedly different. For all $u\in \ltep$ and $T>0$, we have 
\begin{align*}
 \ininf{u_T}{y_T} = \int_{0}^{T} u(t) y(t)\,\mathrm{d} t = \frac{1}{2} \abs{y(T)}^2 \geq 0
\end{align*}
since $y(T)$ can take possibly nonzero values for $T>0$. It is then easy to conclude $\srg_e(\sysp)=\ccp\setminus{\{0\}}$. The distinction between the above soft and hard SRGs can be clearly understood from sketching the Nyquist plot of $\frac{1}{s}$. A semi-circle detour around the unstable pole $s=0$ is required in constructing the Nyquist intended contour for $\frac{1}{s}$, which generates a phase-shift of $180^\circ$ in $\ccp$ when $s$ travels along the detour. For the soft SRG, the input-output trajectories are both restricted to be in $\lt$ so that the unbounded mode due to the pole $s=0$ will no longer be excited. In this case, the effect of the detour is ``neglected'' in the sketch and thereby $\{ \frac{1}{\jw}\mid \omega \in \mathbb{R} \setminus \{0\} \}$ is obtained. On the contrary, for the hard SRG, the unbounded mode is always excited since all the $\ltep$-trajectories are involved, which results in the phase-shift of $180^\circ$ in the sketch: $\{{\epsilon^{-1} e^{-j\alpha}\mid \alpha\in \interval{-\frac{\pi}{2}}{\frac{\pi}{2}}, \epsilon>0} \}$. Thus, $\srg_e(\sysp)=\ccp\setminus{\{0\}}$ meets our expectation.

The relationship between the hard SRG and the Nyquist plot of SISO LTI systems is nontrivial, possibly different from that for the soft SRG in~\cite[Th.~4]{Chaffey:21j}, and deserves further investigation.

\section{Soft and Hard SRG Separation}\label{sec: main}
In this section, we present two main results of this paper, stability analysis of feedback systems based on the use of soft and hard SRGs in order. The results show that separation of the two SRGs of open-loop systems in $\bar{\mathbb{C}}$ guarantees feedback stability. 

 \begin{figure}[htb]
 \centering
 \setlength{\unitlength}{1mm}
 \begin{picture}(50,25)
 \thicklines \put(0,20){\vector(1,0){8}} \put(10,20){\circle{4}}
 \put(12,20){\vector(1,0){8}} \put(20,15){\framebox(10,10){$\sysp$}}
 \put(30,20){\line(1,0){10}} \put(40,20){\vector(0,-1){13}}
 \put(38,5){\vector(-1,0){8}} \put(40,5){\circle{4}}
 \put(50,5){\vector(-1,0){8}} \put(20,0){\framebox(10,10){$\sysc$}}
 \put(20,5){\line(-1,0){10}} \put(10,5){\vector(0,1){13}}
 \put(5,10){\makebox(5,5){$y_2$}} \put(40,10){\makebox(5,5){$y_1$}}
 \put(0,20){\makebox(5,5){$d_1$}} \put(45,0){\makebox(5,5){$d_2$}}
 \put(13,20){\makebox(5,5){$u_1$}} \put(32,0){\makebox(5,5){$u_2$}}
 \end{picture}\caption{A feedback system $\gof$.} \label{fig:feedback}
 \end{figure}

Consider a positive feedback system in Fig.~\ref{fig:feedback}, where $\boldsymbol{P}\colon \ltep^n \rightarrow \ltep^n$ and $\boldsymbol{C}\colon \ltep^n \rightarrow \ltep^n$ are two causal systems, $d_1$ and $d_2$ are external trajectories, and $u_1, u_2, y_1$ and $y_2$ are internal trajectories. These trajectories are related by the following feedback equations:
\be\label{eq:feedback}
 u=d+\tbt{0}{\boldsymbol{I}}{\boldsymbol{I}}{0}y \text{ and } y=\tbt{\boldsymbol{P}}{0}{0}{\boldsymbol{C}}u,
\ee
where $u=\left[u_1^\top~u_2^\top\right]^\top$, $d=\left[{d_1^\top}~{d_2^\top}\right]^\top$ and $y=\left[{y_1^\top}~{y_2^\top}\right]^\top$. Let $\gof$ denote such a feedback system and we define the following feedback mapping:
 \bex
 \fpc\coloneqq \tbt{\sysi}{-\sysc}{-\sysp}{\sysi} = u \mapsto d: \ltep^{2n} \rightarrow \ltep^{2n}.
 \eex 
Throughout this paper, all feedback systems are assumed to be well-posed in the following sense. 
 
\begin{definition}[Well-posedness]
A feedback system $\gof$ is called \emph{well-posed} if $\fpc$ has a causal inverse on $\ltep^{2n}$.
\end{definition}

The input-output stability of $\gof$ is defined as follows.
\begin{definition}[Feedback stability]\label{def:feedback_stability}
 A well-posed feedback system $\gof$ is said to be \emph{stable} if it is bounded and $\left\|(\fpc)^{-1}\right\|_{\RI}<\infty$.
\end{definition}

We are ready to present the main results on stability analysis of $\gof$ by using soft and hard SRGs sequentially. The results can be summarized in one sentence: 
 
\emph{Feedback stability of $\gof$ is guaranteed if there is no intersection between the SRG of $\sysp$ and the inverse SRG of $\sysc$.}
 
The results can be regarded as a graphical statement of the classical topological graph separation \cite{Zames:66, Teel:96, Georgiou:97, Teel:11}. The use of soft and hard notions will entail different separation conditions and the latter is simpler. We begin with hard-type separation for simplicity.

\subsection{Hard SRG Separation}
The first result below establishes a feedback stability condition based on separation of hard SRGs in $\bar{\mathbb{C}}$.

\begin{theorem}[Hard SRG Separation]\label{thm: hardSG}
Consider a well-posed feedback system $\gof$ with $d_2=0$, where $\sysp$ is stable. Then, $\gof$ with $d_2=0$ is stable if 
\be\label{eq: hard_SRG_separation}
 \inf_{\substack{z_{1} \in \srg_e(\sysp), z_{2} \in \srg^\dagger_e(\sysc)}} \abs{ z_1 - z_2} >0.
\ee
\end{theorem}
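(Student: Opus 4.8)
The plan is to reduce feedback stability of $\gof$ (with $d_2=0$) to a single uniform incremental bound. Well-posedness supplies a causal inverse $(\fpc)^{-1}$ on $\ltep^{2n}$, so by Definition~\ref{def:feedback_stability} it suffices to show $\inorm{(\fpc)^{-1}}<\infty$; the non-incremental bound defining boundedness is then the special case obtained by taking the second trajectory to be the zero solution, admissible since $\sysp 0=\sysc 0=0$, and $\dom=\lt$ follows because a uniform truncated bound sends $\lt$-inputs to $\lt$-outputs. Working over $\ltep$ and using the truncated characterization of incremental gain recalled in Section~\ref{sec:preliminaries}, I fix $T>0$ and a pair of loop trajectories and read \eqref{eq:feedback} incrementally with $d_2=0$: $\Delta u_2=\Delta y_1$ and $\Delta d_1=\Delta u_1-\Delta y_2$, where $y_1=\sysp u_1$ and $y_2=\sysc u_2$. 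The truncated pair $((\Delta u_1)_T,(\Delta y_1)_T)$ then yields a point $z_1\in\srg_e(\sysp)$ and $((\Delta y_2)_T,(\Delta u_2)_T)$ a point $z_2\in\srg^\dagger_e(\sysc)$, whenever the truncations involved are nonzero.

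I would then argue by contradiction. If $\inorm{(\fpc)^{-1}}=\infty$, there exist loop trajectory pairs and times $T_n$ along which $\norm{(\Delta u)_{T_n}}_2/\norm{(\Delta d_1)_{T_n}}_2\to\infty$. Writing $a_n=(\Delta u_1)_{T_n}$, $b_n=(\Delta u_2)_{T_n}=(\Delta y_1)_{T_n}$ and $c_n=(\Delta y_2)_{T_n}$, and dividing these vectors by $\norm{(\Delta u)_{T_n}}_2$ --- which leaves the scale-invariant points $z_{1,n},z_{2,n}$, the linear relation $a_n-c_n=(\Delta d_1)_{T_n}$, and the homogeneous $\sysp$-bound below unchanged --- I may assume $\norm{a_n}_2^2+\norm{b_n}_2^2=1$ and $\norm{a_n-c_n}_2\to 0$. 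Here $z_{1,n}=\frac{\norm{b_n}_2}{\norm{a_n}_2}e^{\pm j\alpha_n}$ and $z_{2,n}=\frac{\norm{b_n}_2}{\norm{c_n}_2}e^{\pm j\beta_n}$ with $\alpha_n,\beta_n\in\interval{0}{\pi}$ the angles of $b_n$ with $a_n$ and with $c_n$. The goal is to show $z_{1,n}-z_{2,n}\to0$ for a matching choice of signs, contradicting \eqref{eq: hard_SRG_separation}.

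The decisive step --- and the one I expect to be the main obstacle --- is to exclude escape of $z_{1,n}$ or $z_{2,n}$ to $\infty\in\bar{\mathbb{C}}$, where the ordinary modulus in \eqref{eq: hard_SRG_separation} carries no information. This is precisely where stability of $\sysp$ is used: the truncated incremental bound $\norm{b_n}_2\le\inorm{\sysp}\,\norm{a_n}_2$ together with the normalization forces $\norm{a_n}_2\ge(1+\inorm{\sysp}^2)^{-1/2}>0$, and since $\norm{a_n-c_n}_2\to0$ the norm $\norm{c_n}_2$ is bounded below as well. Hence $\abs{z_{1,n}}\le\inorm{\sysp}$ and $\abs{z_{2,n}}$ stay bounded, so both points remain in a compact subset of $\mathbb{C}$. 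With the denominators bounded away from zero, $a_n-c_n\to0$ gives $\bigl|\norm{a_n}_2-\norm{c_n}_2\bigr|\to0$, while $\ininf{c_n}{b_n}=\ininf{a_n}{b_n}-\ininf{a_n-c_n}{b_n}$ and Cauchy--Schwarz give $\ininf{c_n}{b_n}-\ininf{a_n}{b_n}\to0$; thus the magnitudes of $z_{1,n}$ and $z_{2,n}$ converge to each other and so do their angles when $\liminf\norm{b_n}_2>0$, whereas if $\norm{b_n}_2\to0$ both points tend to $0$. In either case $z_{1,n}-z_{2,n}\to0$, the contradiction.

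It remains to dispatch the degenerate truncations so that $z_{1,n},z_{2,n}$ are genuinely admissible. Causality of $\sysp$ gives $(\Delta u_1)_T=0\Rightarrow(\Delta y_1)_T=0$, and causality of $\sysc$ gives $(\Delta u_2)_T=0\Rightarrow(\Delta y_2)_T=0$, i.e.\ $a_n=0\Rightarrow b_n=0$ and $b_n=0\Rightarrow c_n=0$. Under the normalization $a_n\neq0$ (otherwise $a_n=b_n=0$), and since $\norm{c_n}_2$ is bounded below we obtain $b_n\neq0$ and $c_n\neq0$ for all large $n$, so $z_{1,n}\in\srg_e(\sysp)$ and $z_{2,n}\in\srg^\dagger_e(\sysc)$ are legitimate points. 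This closes the contradiction and yields a uniform $\kappa$ with $\norm{(\Delta u)_T}_2\le\kappa\,\norm{(\Delta d_1)_T}_2$ for every $T$ and every pair, whence $\inorm{(\fpc)^{-1}}\le\kappa<\infty$; boundedness with $\dom=\lt$ then follows by specializing one trajectory to the zero solution as above.
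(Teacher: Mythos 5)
Your proof is correct, but it takes a genuinely different route from the paper's. The paper argues directly and constructively: it splits the separation hypothesis into a gain-separation alternative and a phase-separation alternative, then classifies every truncated trajectory pair in the loop into one of four sub-cases (incremental small-gain, incremental large-gain, and two incremental small-phase configurations), deriving in each an explicit bound $\norm{(\Delta u)_T}_2\le c_i\norm{(\Delta d_1)_T}_2$ from the feedback equations and Cauchy--Schwarz; the final constant is $\max_i c_i$. You instead negate the uniform truncated bound, normalize the resulting sequence of loop increments, and use compactness to force the associated points $z_{1,n}\in\srg_e(\sysp)$ and $z_{2,n}\in\srg^\dagger_e(\sysc)$ to coalesce, contradicting \eqref{eq: hard_SRG_separation}. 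Both arguments deploy the stability of $\sysp$ for the same essential purpose---to prevent the relevant SRG points from escaping to $\infty\in\bar{\mathbb{C}}$, where the separation condition carries no information (in the paper this appears as the bound $\norm{\Delta u_2}_T=\norm{\Delta y_1}_T\le\inorm{\sysp}\norm{\Delta u_1}_T$ in the phase cases)---and both use causality to dispatch degenerate truncations. What the paper's route buys is an explicit closed-loop incremental gain in terms of the separation data and a transparent ``gain versus phase'' reading of the condition. What your route buys is robustness of the logic: the paper's reduction of $\abs{z_1-z_2}\ge\delta$ to two uniform alternatives (equal angles imply moduli differ by $\epsilon$; equal moduli imply angles differ by $\epsilon$) is somewhat delicate for unbounded SRG regions, whereas your argument never needs such a dichotomy because the normalization together with $\inorm{\sysp}<\infty$ confines both sequences to a fixed compact subset of $\mathbb{C}$ before the separation hypothesis is ever invoked. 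The price is that your constant $\kappa$ is non-constructive, and you should (as is standard) pass to a subsequence along which $\norm{b_n}_2$ either vanishes or stays bounded below before concluding.
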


\begin{proof} 
Consider two groups of trajectories for $\gof$ in Fig.~\ref{fig:feedback}, $u_1, u_2, y_1, y_2, d_1, d_2\in \mathcal{L}_{2e}$ with ${d}_2=0$ and $\bar{u}_1, \bar{u}_2, \bar{y}_1, \bar{y}_2, \bar{d}_1, \bar{d}_2\in \mathcal{L}_{2e}$ with $\bar{d}_2=0$. For notational brevity, the increment between $u_1$ and $\bar{u}_1$ is abbreviated as $\Delta u_1\coloneqq u_1-\bar{u}_1$. Similarly, $\Delta u_2, \Delta y_1, \Delta y_2$ and $\Delta d_1$ are adopted. Let $\Delta u \coloneqq u -\bar{u}$, where $u\coloneqq [u_1^\top~u_2^\top]^\top$ and $\bar{u}\coloneqq [{\bar{u}_1^\top}~{\bar{u}_2^\top}]^\top$. For $T>0$, the semi-norm $\norm{(\cdot)_T}_2$ of an $\ltep$ signal is shortened as $\norm{(\cdot)}_T$ without ambiguity. In a feedback loop, note that if $\Delta u_1 = 0$, then $\Delta y_1 = 0$ (the dependence of $(\cdot)_T$ on $T>0$ is omitted here). It follows from $d_2=0, \bar{d}_2=0$ and \eqref{eq:feedback} that $\Delta u_2=0$ and thus $\Delta y_2 = 0$. This further implies that $\Delta d_1 = \Delta u_1 =0$. Hence, the feedback loop with $d_2=0$ is stable. In addition, in the case where $\Delta y_2=0$, then $\Delta u_1 = \Delta d_1$ and the feedback loop is also stable due to the open-loop stability of $\sysp$. It thus suffices to consider only nonzero increments in the following stability analysis. WLOG, in what follows we can consider only points in hard SRGs in the closed upper half-plane due to the symmetric nature in \eqref{eq: def_hard_srg}.

By hypothesis~\eqref{eq: hard_SRG_separation}, for all $z_1\in \srg_e(\sysp)$ and $z_2\in \srg^{\dagger}_e(\sysc)$, we have $\abs{z_1-z_2}\geq \delta>0$ for some constant $\delta>0$. This means that at least one of the following two cases has to hold:
 \begin{enumerate}
 \renewcommand{\theenumi}{\textup{(\roman{enumi})}}\renewcommand{\labelenumi}{\theenumi}
 \item $\abs{\abs{z_1} -\abs{z_2}}\geq \epsilon$;\label{item: thm_proof_gain} 
 \item $\abs{\angle z_1 -\angle z_2} \geq \epsilon$, \label{item: thm_proof_angle} 
 \end{enumerate}
where $\epsilon>0$ is a constant that can be determined from given $\delta$. For each case, we show in the following that the internal increment $\Delta u$ is bounded by the external increment $\Delta d_1$, that is, $\norm{\Delta u}_T \leq c \norm{\Delta d_1}_T$ for all $T>0$, where $c>0$ is a constant.
 
Case \ref{item: thm_proof_gain}: $\abs{\abs{z_1} -\abs{z_2}}\geq \epsilon$. Firstly, consider $\abs{z_2}-\abs{z_1}\geq \epsilon$. By $z_1\in \srg_e(\sysp)$ in definition~\eqref{eq: def_hard_srg} and $z_2\in \srg^{\dagger}_e(\sysc)$, for all $u_1, u_2, \bar{u}_1,$ $\bar{u}_2\in \ltep$ such that $\abs{z_1} \leq \abs{z_2} - \epsilon$ holds, we have
 \begin{align}\label{thm:thm1_eq1}
 \frac{\norm{\Delta y_1 }_T}{\norm{\Delta u_1}_T} \frac{\norm{\Delta y_2}_T}{\norm{\Delta u_2}_T} \leq 1- \epsilon \frac{\norm{\Delta y_{2}}_T}{\norm{\Delta u_{2}}_T} < 1
 \end{align}
for all $T>0$, where $(\Delta u_1)_T\neq 0$, $(\Delta u_2)_T\neq 0$ and $(\Delta y_2)_T\neq 0$. Note that according to \eqref{thm:thm1_eq1}, for $1- \epsilon \frac{\norm{\Delta y_{2}}_T}{\norm{\Delta u_{2}}_T}$ to be arbitrarily close to $1$, the term $\frac{\norm{\Delta y_{2}}_T}{\norm{\Delta u_{2}}_T}$ needs to be arbitrarily close to $0$, which means that $ \frac{\norm{\Delta y_1 }_T}{\norm{\Delta u_1}_T} \frac{\norm{\Delta y_2}_T}{\norm{\Delta u_2}_T} \to 0$ since $\sysp$ is stable, i.e., $\frac{\norm{\Delta y_1 }_T}{\norm{\Delta u_1}_T} \leq \norm{\sysp}_\mathrm{I}$. Hence, $\frac{\norm{\Delta y_1 }_T}{\norm{\Delta u_1}_T} \frac{\norm{\Delta y_2}_T}{\norm{\Delta u_2}_T}$ is strictly less than $1$, independently of $\Delta u_1, \Delta u_2, \Delta y_1, \Delta y_2$ and $T$. WLOG, assume that 
\begin{equation}\label{eq: thm1_eq1}
 \frac{\norm{\Delta y_{1}}_T}{\norm{\Delta u_{1}}_T}\leq \alpha_1(\Delta u_1, \Delta y_1, T) \text{ and } \frac{\norm{\Delta y_{2}}_T}{\norm{\Delta u_{2}}_T}\leq \alpha_2(\Delta u_2, \Delta y_2, T)
\end{equation}
for all $T>0$, where $\alpha_1(\Delta u_1, \Delta y_1, T)>0$ and $\alpha_2(\Delta u_2, \Delta y_2, T)>0$ (shortened as $\alpha_1(\cdot)$ and $\alpha_2(\cdot)$ below). From the analysis above, there exists a constant $\alpha$, independent of $\Delta u_1, \Delta u_2, \Delta y_1, \Delta y_2$ and $T$, such that
\begin{align}\label{eq: thm1_product_alpha}
 \alpha_1(\cdot)\alpha_2(\cdot) \leq \alpha <1.
\end{align}
Using the feedback equations $u_1=d_1+y_2$ and $u_2=y_1$, it follows from \eqref{eq: thm1_eq1} and the triangle inequality that 
\begin{align*}
 \norm{\Delta u_{1}}_T&\leq \norm{\Delta d_{1}}_T +\norm{\Delta y_{2}}_T \leq \norm{\Delta d_{1}}_T + \alpha_2(\cdot)\norm{\Delta u_{2}}_T,\\
 \norm{\Delta u_{2}}_T&= \norm{\Delta y_{1}}_T\leq \alpha_1(\cdot)\norm{\Delta u_{1}}_T, 
\end{align*}
for all $T>0$. Therefore, we have
 \begin{align*}
 \norm{\Delta u_{1}}_T\leq \norm{\Delta d_{1}}_T + \alpha_2(\cdot)\alpha_1(\cdot)\norm{\Delta u_{1}}_T
 \end{align*}
for all $T>0$. Since $\alpha_1(\cdot)\alpha_2(\cdot) \leq \alpha<1$ in \eqref{eq: thm1_product_alpha}, we arrive at
 \begin{align}
 \norm{\Delta u_{1}}_T&\leq \frac{1}{1-\alpha}\norm{\Delta d_{1}}_T, \label{eq: thm1_oneside_gain_1}\\ 
 \norm{\Delta u_{2}}_T&\leq {\norm{\sysp}_{\mathrm{I}} \norm{\Delta u_{1}}_T \leq \frac{\norm{\sysp}_{\mathrm{I}}}{1-{ \alpha}}}\norm{\Delta d_{1}}_T \notag 
 \end{align}
 for all $T>0$. Since $\norm{\Delta u}_T \leq \norm{\Delta u_1}_T + \norm{\Delta u_2}_T$, then for all $u_1, u_2, \bar{u}_1, \bar{u}_2 \in \ltep$ such that \eqref{thm:thm1_eq1} holds, $\Delta u$ is finite-incremental-gain bounded by $\Delta d_1\in \ltep$; namely, for all $T>0$, 
 \be\label{eq: thm1_eq5}
\norm{\Delta u}_T \leq \frac{1+\norm{\sysp}_{\mathrm{I}}}{ 1- \alpha}\norm{\Delta d_1}_T\eqqcolon c_1 \norm{\Delta d_1}_T.
 \ee 

Secondly, consider $\abs{z_1}-\abs{z_2}\geq \epsilon$. By analogy with the above arguments, for all $u_1, u_2, \bar{u}_1, \bar{u}_2\in \ltep$ such that $\abs{z_1}\geq \abs{z_2} + \epsilon$ holds, we have
 \begin{align}\label{thm:thm1_eq2}
 \frac{\norm{\Delta y_{1}}_T}{\norm{\Delta u_{1}}_T} \frac{ \norm{\Delta y_{2}}_T}{ \norm{\Delta u_{2}}_T} \geq 1 + \epsilon{\frac{ \norm{\Delta y_{2}}_T}{ \norm{\Delta u_{2}}_T}} > 1 
 \end{align} 
 for all $T>0$, where $(\Delta u_1)_T\neq 0$, $(\Delta u_2)_T\neq 0$ and $(\Delta y_2)_T\neq 0$. For $1+ \epsilon \frac{\norm{\Delta y_{2}}_T}{\norm{\Delta u_{2}}_T}$ to be arbitrarily close to $1$, $\frac{\norm{\Delta y_{2}}_T}{\norm{\Delta u_{2}}_T}$ needs to be arbitrarily close to $0$, implying that $ \frac{\norm{\Delta y_1 }_T}{\norm{\Delta u_1}_T} \frac{\norm{\Delta y_2}_T}{\norm{\Delta u_2}_T} \to 0$, contradicting \eqref{thm:thm1_eq2}. Hence, $\frac{\norm{\Delta y_1 }_T}{\norm{\Delta u_1}_T} \frac{\norm{\Delta y_2}_T}{\norm{\Delta u_2}_T}$ must be uniformly strictly greater than $1$. WLOG, assume that
 \begin{align}\label{eq: thm1_eq2}
 \frac{\norm{\Delta y_{1}}_T}{\norm{\Delta u_{1}}_T}\geq \beta_1(\Delta u_1, \Delta y_1, T) \text{ and } \frac{\norm{\Delta y_{2}}_T}{\norm{\Delta u_{2}}_T}\geq \beta_2(\Delta u_2, \Delta y_2, T)
 \end{align}
 for all $T>0$, 
 where $\beta_1(\Delta u_1, \Delta y_1, T)>0, \beta_2(\Delta u_2, \Delta y_2, T)>0$
\begin{align*}
 \text{and } \beta_1(\cdot)\beta_2(\cdot)\geq \beta>1
\end{align*}
with $\beta$ being a constant owing to the analysis above. 
 Using the relations $y_2=u_1-d_1$ and $y_1=u_2$, it follows from \eqref{eq: thm1_eq2} that
 \begin{align*}
 \beta_1(\cdot)\norm{\Delta u_{1}}_T&\leq \norm{\Delta y_{1}}_T = \norm{\Delta u_{2}}_T,\\ 
 \beta_2(\cdot)\norm{\Delta u_{2}}_T&\leq \norm{\Delta y_{2}}_T\leq \norm{\Delta u_{1}}_T + \norm{\Delta d_{1}}_T
 \end{align*}
 for all $T>0$. Therefore, we immediately have
 \begin{align*}
 \norm{\Delta u_{1}}_T\leq \beta_1(\cdot)^{-1} \beta_2(\cdot)^{-1} \rbkt{ \norm{\Delta u_{1}}_T +\norm{\Delta d_{1}}_T}
 \end{align*}
 for all $T>0$.
 Since $\beta_1(\cdot)\beta_2(\cdot) \geq \beta >1$, we arrive at
 \begin{align}
 \norm{\Delta u_{1}}_T&\leq \frac{1}{\beta-1}\norm{\Delta d_{1}}_T, \label{eq: thm1_oneside_gain_2}\\ 
 \norm{\Delta u_{2}}_T&\leq \norm{\sysp}_{\mathrm{I}} \norm{\Delta u_{1}}_T \leq \frac{ \norm{\sysp}_{\mathrm{I}}}{\beta-1}\norm{\Delta d_{1}}_T \notag 
 \end{align}
 for all $T>0$. 
 This gives that for all $u_1, u_2, \bar{u}_1, \bar{u}_2\in \ltep$ such that \eqref{thm:thm1_eq2} holds, for all $T>0$, we have 
 \be\label{eq: thm1_eq6}
 \norm{\Delta u}_T \leq \frac{1+\norm{\sysp}_{\mathrm{I}}}{\beta -1}\norm{\Delta d_{1}}_T\eqqcolon c_2 \norm{\Delta d_1}_T.
 \ee 
 
Case \ref{item: thm_proof_angle}: $\abs{\angle z_1 -\angle z_2} \geq \epsilon$. Firstly, consider $\angle{z_2}-\angle{z_1}\geq \epsilon$. Equivalently, there must exist $\bar{\epsilon} >0$, determined by $\epsilon$, such that $\cos\angle z_1 - \cos\angle z_2\geq \bar{\epsilon}>0$, since $\cos(\cdot)$ is monotonically decreasing on $\interval{0}{\pi}$. By $z_1\in \srg_e(\sysp)$ in~\eqref{eq: def_hard_srg} and $z_2\in \srg^{\dagger}_e(\sysc)$, for all $u_1, u_2, \bar{u}_1, \bar{u}_2 \in \ltep$ such that $\angle{z_2}-\angle{z_1}\geq \epsilon$ holds, we have 
\begin{align*}
 \frac{\ininf{\Delta u_1}{\Delta y_1}_T}{\norm{\Delta u_{1}}_T\norm{\Delta y_{1}}_T} - \frac{\ininf{\Delta y_2}{\Delta u_2}_T}{\norm{\Delta u_{2}}_T\norm{\Delta y_{2}}_T} \geq \bar{\epsilon} >0
\end{align*}
for all $T>0$, where $(\Delta u_1)_T\neq 0$, $(\Delta u_2)_T\neq 0$, $(\Delta y_1)_T\neq 0$ and $(\Delta y_2)_T\neq 0$. Using $u_1=d_1+y_2$ and $u_2=y_1$, we have 
\begin{align*}
 \ininfbig{\frac{(\Delta d_{1})_T}{\norm{\Delta u_{1}}_T} +\frac{(\Delta y_{2})_T}{\norm{\Delta u_{1}}_T}}{\frac{(\Delta u_{2})_T}{\norm{\Delta u_{2}}_T}} - \ininfbig{\frac{(\Delta y_{2})_T}{\norm{\Delta y_{2}}_T}}{ \frac{(\Delta u_{2})_T}{\norm{\Delta u_{2}}_T}} \geq \bar{\epsilon} 
\end{align*}
 holds for all $T>0$.
 This gives 
 \begin{multline}\label{eq: thm1_eq3}
 \ininfbig{\frac{(\Delta d_{1})_T}{\norm{\Delta u_{1}}_T}}{\frac{(\Delta u_{2})_T}{\norm{\Delta u_{2}}_T}} \\+ 
 \ininfbig{\frac{\sbkt{\norm{\Delta y_{2}}_T-\norm{\Delta u_{1}}_T }(\Delta y_{2})_T}{\norm{\Delta u_{1}}_T\norm{\Delta y_{2}}_T}}{\frac{(\Delta u_{2})_T}{\norm{\Delta u_{2}}_T}} 
 \geq \bar{\epsilon} 
 \end{multline}
 for all $T>0$. Applying the Cauchy-Schwarz inequality to \eqref{eq: thm1_eq3} gives 
 \bex
 \frac{\norm{\Delta d_{1}}_T}{\norm{\Delta u_{1}}_T} + \frac{\abs{ \norm{\Delta y_{2}}_T -\norm{\Delta u_{1}}_T}}{\norm{\Delta u_{1}}_T} \geq \bar{\epsilon}
 \eex
 for all $T>0$.
Since $\norm{\Delta d_{1}}_T \geq \abs{\norm{\Delta u_{1}}_T - \norm{\Delta y_{2}}_T}$ for all $T>0$, it follows that
 \begin{multline*}
 \frac{\norm{\Delta d_{1}}_T}{\norm{\Delta u_{1}}_T} + \frac{\norm{\Delta d_1}_T}{\norm{\Delta u_1}_T}\geq \frac{\norm{\Delta d_{1}}_T}{\norm{\Delta u_1}_T} + \frac{\abs{ \norm{\Delta y_{2}}_T - \norm{\Delta u_{1}}_T}}{\norm{\Delta u_{1}}_T} \geq \bar{\epsilon}
 \end{multline*}
 holds for all $T>0$. This implies 
 \be\label{eq: thm1_oneside}
 \norm{\Delta u_{1}}_T \leq \frac{2}{\bar{\epsilon}} \norm{\Delta d_{1}}_T
 \ee
 for all $T>0$.
 Since by hypothesis $\sysp$ is stable and by $d_2=0$ and $\bar{d}_2=0$ so that $\Delta d_2=0$, for all $T>0$, we have 
 \begin{equation}\label{eq: thm1_eq7}
 \begin{aligned}
 &\norm{\Delta u}_T \leq \norm{\Delta u_{1}}_T + \norm{\Delta u_{2}}_T = \norm{\Delta u_{1}}_T + \norm{\Delta y_{1}}_T \\
 \leq &~(1+\inorm{\sysp}) \norm{\Delta u_{1}}_T \leq \frac{2+2\inorm{\sysp}}{\bar{\epsilon}} \norm{\Delta d_{1}}_T \eqqcolon c_3 \norm{\Delta d_{1}}_T.
 \end{aligned} 
 \end{equation} 
 
 Secondly, consider $\angle{z_1}-\angle{z_2}\geq \epsilon$. For all $u_1, u_2, \bar{u}_1, \bar{u}_2\in \ltep$ such that $\angle{z_1}-\angle{z_2}\geq \epsilon$ holds, we have
 \begin{align*}
 \frac{\ininf{\Delta y_2}{\Delta u_2}_T}{\norm{(\Delta u_{2})_T}_2\norm{\Delta y_{2}}_T} - \frac{\ininf{\Delta u_1}{\Delta y_1}_T}{\norm{\Delta u_{1}}_T\norm{\Delta y_{1}}_T} \geq \tilde{\epsilon}>0
 \end{align*}
 for all $T>0$, where $\tilde{\epsilon}$ is determined by $\epsilon$.
 Using $u_1=d_1+y_2$ and $u_2=y_1$, we then have
 \bex
 \hspace{-1mm}\ininfbig{\frac{(\Delta y_{2})_T}{\norm{\Delta y_{2}}_T}}{ \frac{(\Delta u_{2})_T}{\norm{\Delta u_{2}}_T}} -\ininfbig{\frac{(\Delta d_{1})_T}{\norm{\Delta u_{1}}_T} +\frac{(\Delta y_{2})_T}{\norm{\Delta u_{1}}_T}}{\frac{(\Delta u_{2})_T}{\norm{\Delta u_{2}}_T}} \geq \tilde{\epsilon}
 \eex
 for all $T>0$. This gives 
 \begin{multline*} 
 \ininfbig{\frac{\sbkt{\norm{\Delta u_{1}}_T-\norm{\Delta y_{2}}_T }(\Delta y_{2})_T}{\norm{\Delta u_{1}}_T\norm{\Delta y_{2}}_T}}{\frac{(\Delta u_{2})_T}{\norm{\Delta u_{2}}_T}} \\+ \ininfbig{\frac{-(\Delta d_{1})_T}{\norm{\Delta u_{1}}_T}}{\frac{(\Delta u_{2})_T}{\norm{\Delta u_{2}}_T}} 
 \geq \tilde{\epsilon}
 \end{multline*}
 for all $T>0$. Following the same reasoning as in \eqref{eq: thm1_eq3} and \eqref{eq: thm1_eq7} yields that for all $T>0$, we have
\be\label{eq: thm1_eq8} 
\norm{\Delta u}_T\leq \frac{2+2\inorm{\sysp}}{\tilde{\epsilon}} \norm{\Delta d_{1}}_T \eqqcolon c_4 \norm{\Delta d_{1}}_T.
\ee 

Combining Cases \ref{item: thm_proof_gain} and \ref{item: thm_proof_angle}, using \eqref{eq: thm1_eq5}, \eqref{eq: thm1_eq6}, \eqref{eq: thm1_eq7} and \eqref{eq: thm1_eq8} and by setting $c\coloneqq \max\bbkt{c_1, c_2, c_3, c_4}>0$, we conclude that $\norm{\Delta u}_T\leq c \norm{\Delta d_{1}}_T$ for all $u, \bar{u}\in \ltep^{2n}$ and $T>0$. Therefore, the well-posed feedback system $\gof$ with $d_2=0$ is stable.
\end{proof}

Each point $z\in \srg_e(\sysp)$ mixes both of the incremental gain and phase information contained in $\sysp$. As a consequence, an underlying idea behind the separation condition \eqref{eq: hard_SRG_separation} on the hard SRGs is \emph{separation} of the \emph{gain} and \emph{phase} information contained in $\sysp$ and that in $\sysc$. This drives us to establish the proof of Theorem~\ref{thm: hardSG} from a gain and phase perspective, which is divided into three scenarios by partitioning input-output trajectory-wise pairs in the feedback loop. They are in essence the incremental small-gain pair, large-gain pair, and small-phase pair. This proof is \emph{new} and \emph{different} from the proof of the existing soft SRG separation result \cite[Th.~2]{Chaffey:21j}.

A technical assumption in Theorem~\ref{thm: hardSG} is worthy of discussion. The feedback structure with the second external input $d_2=0$ is considered; see also \cite[p.~181]{Desoer:75}. Investigating such a structure is often sufficient \cite{Megretski:97} in comparison to exploring $\gof$ with both $d_1$ and $d_2$. Particularly, for $\gof$, where $\sysc$ is a linear bounded system, the effect of the input $d_2$ in the feedback loop can be included in that of the input $d_1$, as elaborated in \cite[Sect.~8]{Jonsson:01}. For this case, stability of $\gof$ is equivalent to that of $\gof$ with $d_2=0$. How to extend Theorem~\ref{thm: hardSG} to accommodate the second input remains open and deserves further exploration. The technical difficulty lies in generalizing the proof of the small-phase pair in Case \ref{item: thm_proof_angle}. 

In light of \eqref{eq: hard_SRG_separation}, the shortest distance between $\srg_e(\sysp)$ and $\srg^{\dagger}_e(\sysc)$ plays the role of \emph{hard-type stability margin} of a loop:
\be\label{eq: stability_margin}
\mathrm{sm}_{e}(\gof)\coloneqq \inf_{\substack{z_{1} \in \srg_e(\sysp), z_{2} \in \srg^\dagger_e(\sysc)} } \abs{ z_1 - z_2}.
\ee
It quantifies \emph{robustness} of stability of the loop in the sense that small perturbations in both $\sysp$ and $\sysc$ will not destroy the feedback stability provided that the distance always remains positive. For instance, one can fix the system $\sysc$ and allow the system $\sysp$ to be uncertain in the sense of an additive type, $\sysp=\sysp_0 + \sysg$, where $\sysp_0$ is regarded as a nominal system and $\sysg$ is uncertain whose hard SRG is known or bounded by a certain region. Then the worst case $\srg_e(\sysp)$ can be easily inferred from $\srg_e(\sysp_0)$ and $\srg_e(\sysg)$, analogously to the interconnection sum rules of soft SRGs shown in \cite[Th.~6]{Ryu:21} and \cite[Prop.~7]{Chaffey:21j}. By examining the shortest distance \eqref{eq: stability_margin} between $\srg_e^\dagger(\sysc)$ and the worst case $\srg_e(\sysp)$, one can deduce the robust stability of the uncertain feedback system $\gof$.

\subsection{Soft SRG Separation}
The second result aims at a feedback stability condition via separation of soft SRGs. In contrast to the hard-type separation in Theorem~\ref{thm: hardSG}, an extra homotopy condition on $\tau \in \interval[open left]{0}{1}$ is required. 

\begin{theorem}[Soft SRG Separation]\label{thm: softSG}
 Consider a feedback system $\gof$ with $d_2=0$, where $\sysp$ and $\sysc$ are stable. Suppose that $\goftau$ with $d_2=0$ is well-posed for all $\tau \in \interval[open left]{0}{1}$. Then, $\gof$ with $d_2=0$ is stable if
 \be\label{eq: soft_SRG_separation}
 \inf_{\substack{z_{1} \in \srg(\sysp), z_{2} \in \srg^\dagger(\tau \sysc), \tau \in \interval[open left]{0}{1}}} \abs{ z_1 - z_2} >0.
 \ee
 \end{theorem}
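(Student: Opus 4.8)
The plan is to combine an a priori bound extracted from the soft separation \eqref{eq: soft_SRG_separation} with a homotopy continuation in $\tau$. The reason the truncated argument of Theorem~\ref{thm: hardSG} cannot simply be reused is Proposition~\ref{prop: soft_hard}: since $\srg(\sysp)\subset\mathrm{cl}~\srg_e(\sysp)$, the soft separation is strictly weaker than the hard one and controls only genuine $\lt$-trajectories of the loop, not their truncations. I would therefore anchor the argument at $\tau=0$, where $\goftau$ degenerates to the open-loop map $u_1=d_1,\ u_2=\sysp d_1$, which is stable because $\sysp$ is, and then transport stability continuously to $\tau=1$.

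First I would upgrade \eqref{eq: soft_SRG_separation} to a bound uniform in $\tau$. Scaling a system multiplies its gain and leaves its phase unchanged, so $\srg^{\dagger}(\tau\sysc)=\tau^{-1}\srg^{\dagger}(\sysc)$; since $\sysp$ and $\sysc$ are stable, $\srg(\sysp)\subset\{z\mid\abs{z}\le\inorm{\sysp}\}$ and $\srg^{\dagger}(\sysc)\subset\{z\mid\abs{z}\ge 1/\inorm{\sysc}\}$. Hence the margin $\mathrm{sm}(\tau)\coloneqq\inf_{z_1\in\srg(\sysp),\,z_2\in\srg^{\dagger}(\tau\sysc)}\abs{z_1-z_2}$ is continuous on $\interval[open left]{0}{1}$, is positive there by hypothesis, and tends to $+\infty$ as $\tau\to 0^+$ because the inverse graph escapes to infinity while $\srg(\sysp)$ stays bounded; consequently $\inf_{\tau\in\interval[open left]{0}{1}}\mathrm{sm}(\tau)>0$. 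With this uniform margin I would then replay the trajectory-wise analysis of the proof of Theorem~\ref{thm: hardSG} almost verbatim, but with the full $\lt$ inner products and norms in place of the truncated ones: the same partition into the incremental small-gain, large-gain and small-phase cases yields a constant $c>0$, independent of $\tau$, such that every pair of $\lt$-trajectories $u,v$ of $\goftau$ with $d_2=0$ satisfies $\norm{\Delta u}_2\le c\,\norm{\Delta d_1}_2$.

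The core of the proof is then a connectedness argument on $\Sigma\coloneqq\{\tau\in\interval{0}{1}\mid \goftau \text{ with } d_2=0 \text{ is stable}\}$. It is nonempty, since $0\in\Sigma$. It is open: near any $\tau_0\in\Sigma$ the loop differs from the stable loop at $\tau_0$ only through the block $(\tau-\tau_0)\sysc$, whose incremental loop gain is $O(\abs{\tau-\tau_0})$, so a standard small-gain perturbation argument preserves stability. It is closed: if $\tau_n\to\tau^*$ with $\tau_n\in\Sigma$, then for fixed $\lt$ data the solution increments $\Delta u_{\tau_n}$ lie in $\lt$ with $\norm{\Delta u_{\tau_n}}_2\le c\,\norm{\Delta d_1}_2$ by the uniform bound, while well-posedness at $\tau^*$ furnishes a unique $\ltep$ solution; using continuity of the solution in $\tau$ in the $\ltep$ topology together with $\norm{(\Delta u_{\tau^*})_T}_2=\lim_n\norm{(\Delta u_{\tau_n})_T}_2\le c\,\norm{\Delta d_1}_2$ for every $T$, one gets $\Delta u_{\tau^*}\in\lt$ with the same bound, so $\tau^*\in\Sigma$. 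Connectedness of $\interval{0}{1}$ then forces $\Sigma=\interval{0}{1}$, hence $1\in\Sigma$; the bound at $\tau=1$ is exactly $\norm{(\fpc)^{-1}}_{\RI}\le c$, and boundedness follows likewise, giving stability of $\gof$ with $d_2=0$.

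The main obstacle is the closedness step, and within it the continuity of the $\ltep$-solution $\tau\mapsto u_{\tau}$: establishing it requires the well-posedness of $\goftau$ for every $\tau\in\interval[open left]{0}{1}$ (the extra hypothesis absent from Theorem~\ref{thm: hardSG}) together with the incremental boundedness of $\sysc$, and is typically obtained from a truncated fixed-point/Lipschitz estimate for the closed-loop solution operator at each $\tau$. This continuation is precisely the \emph{homotopy argument} foreshadowed in the Introduction as the price one pays for working with the weaker soft separation rather than the hard one.
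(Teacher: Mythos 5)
Your proposal follows essentially the same route as the paper: a $\tau$-independent a priori incremental bound extracted from \eqref{eq: soft_SRG_separation} by replaying the three-case (incremental small-gain / large-gain / small-phase) trajectory analysis of Theorem~\ref{thm: hardSG} over full $\lt$ inner products rather than truncations, followed by a homotopy continuation from the trivially stable loop at $\tau=0$. The one substantive difference is how the continuation is organized. The paper converts the a priori bound $\norm{u-v}_2\le c_0\norm{\sysf_{\sysp\#(\tau\sysc)}u-\sysf_{\sysp\#(\tau\sysc)}v}_2$ directly into a perturbation step of \emph{uniform} size $\mu=1/(c_0\inorm{\sysc})$, via a truncated estimate that absorbs the extra block $\nu\sysc$ using $\inorm{\sysc}<\infty$ and the causality of $\sysc$, and then reaches $\tau=1$ in finitely many steps; no closedness argument is needed. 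You instead package the continuation as an open-and-closed-in-$\interval{0}{1}$ argument, and your closedness half rests on an asserted but unproved continuity of the $\ltep$ solution in $\tau$ --- which you yourself identify as the main obstacle. That step is both the weakest link and, given your own setup, unnecessary: because your constant $c$ is independent of $\tau$, the closed-loop incremental gain is uniformly bounded wherever stability holds, so your openness radius is already uniform and the ``nonempty plus uniform-radius openness'' induction finishes the proof exactly as the paper's Steps 2--3 do. One point in your favor: you explicitly justify that the separation margin is uniform over $\tau\in\interval[open left]{0}{1}$ (boundedness of $\srg(\sysp)$ by $\inorm{\sysp}$, the scaling identity $\srg^{\dagger}(\tau\sysc)=\tau^{-1}\srg^{\dagger}(\sysc)$, and the escape of the inverse graph to infinity as $\tau\to0^+$), a uniformity that the paper's proof invokes implicitly when it writes a single $\delta>0$ for all $\tau$ without comment.
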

 \begin{proof}
 See Appendix.
 \end{proof}
 
A homotopy argument similar to that used in \cite[Th.~1]{Megretski:97} and \cite{Chaffey:21j} is adopted in both the statements and proof of Theorem~\ref{thm: softSG}. Compared with the hard-type stability margin \eqref{eq: stability_margin}, the shortest distance between $\srg(\sysp)$ and $\srg^{\dagger}(\tau\sysc)$ serves as the \emph{soft-type stability margin}, i.e.,
\bex 
\mathrm{sm}(\gof)\coloneqq \inf_{\substack{z_{1} \in \srg(\sysp), z_{2} \in \srg^\dagger(\tau \sysc), \tau \in \interval[open left]{0}{1}}} \abs{ z_1 - z_2}.
 \eex 
Robustness of the feedback stability can similarly be inferred from a positive margin $\mathrm{sm}(\gof)>0$ which can be less conservative than utilizing $\mathrm{sm}_e(\gof)>0$ due to Proposition~\ref{prop: soft_hard}. Condition \eqref{eq: soft_SRG_separation} involving $\tau$ is symmetric in $\sysp$ and $\sysc$. To be specific, instead of \eqref{eq: soft_SRG_separation}, one may also examine the distance between $\srg(\tau \sysp)$ and $\srg^{\dagger}(\sysc)$ for all $\tau \in \interval[open left]{0}{1}$ for the ease of verification. For special classes of systems like incrementally gain-bounded or incrementally positive systems, $\tau$ may be further removed and comparing $\srg(\sysp)$ with $\srg^{\dagger}(\sysc)$
 often becomes sufficient for feedback stability.

A restricted result similar to Theorem~\ref{thm: softSG} has appeared in \cite[Th.~2]{Chaffey:21j}. To clarify the difference and highlight our contribution, we draw a comparison of Theorem~\ref{thm: softSG} with \cite[Th.~2]{Chaffey:21j} in Section~\ref{sec: comparison_SRG}.

\subsection{Graphical Characterizations of Incremental Positivity Theorems and Incremental Passivity Theorems}\label{sec: passivitythm}
Based upon the soft and hard SRG separation results, we now specialize them to the celebrated incremental positivity theorem and incremental passivity theorem \cite[Sect.~VI.4]{Desoer:75}. Roughly speaking, a typical incremental positivity (resp. passivity) theorem for a \emph{negative} feedback system requires one open-loop component to be incrementally positive (resp. passive) and the other to be strictly incrementally positive (resp. passive). Such a theorem may be viewed as a direct consequence of Theorem~\ref{thm: softSG} (resp. Theorem~\ref{thm: hardSG}) as detailed below.

\begin{corollary}\label{cor: passivity_thm}
A well-posed feedback system $\boldsymbol{P}\,\#\, (-\boldsymbol{C})$ is stable with $d_2=0$ if one of the following conditions holds:
\begin{enumerate}
 \renewcommand{\theenumi}{\textup{(\roman{enumi})}}\renewcommand{\labelenumi}{\theenumi}
 \item \label{item: passivitytheorem} $\sysp$ is strictly incrementally passive and $-\sysc$ is incrementally passive.
 \item $\sysp$ is stable and strictly incrementally positive and $-\sysc$ is stable and incrementally positive;
\end{enumerate}
 \end{corollary}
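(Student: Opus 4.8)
The plan is to read each case directly off the matching separation theorem, translating the hypotheses into containments of the relevant SRGs in complementary regions of $\cex$. Consider first the passivity case~\ref{item: passivitytheorem}. Because the loop is $\gof$ with $\sysc$ replaced by $-\sysc$, I would invoke Theorem~\ref{thm: hardSG} for the pair $(\sysp,-\sysc)$, so that the separation condition~\eqref{eq: hard_SRG_separation} reads $\inf\{\abs{z_1-z_2}\mid z_1\in\srg_e(\sysp),\,z_2\in\srg^{\dagger}_e(-\sysc)\}>0$. Strict incremental passivity of $\sysp$ gives, via Proposition~\ref{prop:strictpassive}, the inclusion $\srg_e(\sysp)\subset\mathcal{D}$, where $\mathcal{D}$ is the bounded sector~\eqref{eq: sectored_disk} of the open right half-plane with $\rep(z)\geq\delta$; in particular $\sysp$ is bounded with $\inorm{\sysp}\leq 1/\epsilon$, so the premise of Theorem~\ref{thm: hardSG} that $\sysp$ be stable is discharged. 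The incremental passivity of the second loop component, combined with the half-plane characterization of Section~\ref{sec: incrementalpassivity} and the inversion $z\mapsto 1/\bar z$ relating a hard SRG to its inverse, confines $\srg^{\dagger}_e(-\sysc)$ to the closed half-plane on the opposite side of the imaginary axis from $\mathcal{D}$.

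With these two locations in hand the separation is immediate: $\srg_e(\sysp)$ lies at distance at least $\delta$ from the imaginary axis, whereas $\srg^{\dagger}_e(-\sysc)$ lies in the complementary closed half-plane, so~\eqref{eq: hard_SRG_separation} holds with margin $\geq\delta>0$, and Theorem~\ref{thm: hardSG} delivers stability with $d_2=0$. I expect the only genuinely delicate step here to be the half-plane bookkeeping for $\srg^{\dagger}_e(-\sysc)$: one must track both the reflection through the origin induced by the minus sign in $-\sysc$ and the unit-circle inversion taking $\srg_e(\cdot)$ to $\srg^{\dagger}_e(\cdot)$, and verify that neither operation carries the set across the imaginary axis, so that it remains disjoint from the sector $\mathcal{D}$ with a positive gap.

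For the positivity case the argument runs in parallel but rests on the soft separation theorem. By Proposition~\ref{prop:strictpassive}, strict incremental positivity of $\sysp$ yields $\srg(\sysp)\subset\mathcal{D}\subset\cop$, and incremental positivity of the loop component places $\srg^{\dagger}(-\tau\sysc)$ in the complementary closed half-plane for every $\tau$; since multiplication by $\tau\in\interval[open left]{0}{1}$ only rescales magnitudes and cannot move a set across the imaginary axis, this location—and hence the separation~\eqref{eq: soft_SRG_separation} with margin $\geq\delta$—persists uniformly in $\tau$. As $\sysp$ and $\sysc$ are both assumed stable, the remaining hypothesis of Theorem~\ref{thm: softSG} to be verified is well-posedness of $\goftau$ (with $\sysc$ replaced by $-\sysc$) for all $\tau\in\interval[open left]{0}{1}$; I anticipate this to be the main obstacle in this case, and I would obtain it from the strict incremental positivity of $\sysp$, which renders the associated feedback map coercive and therefore invertible along the entire homotopy. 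Granting this, Theorem~\ref{thm: softSG} applies and the claim follows.
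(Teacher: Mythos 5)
Your overall route is the same as the paper's: use Proposition~\ref{prop:strictpassive} to place the SRG of the strictly passive (resp.\ positive) component inside the truncated sector $\mathcal{D}$ of \eqref{eq: sectored_disk}, place the other component's inverse SRG in a closed half-plane, read off a separation margin of at least $\delta$, and invoke Theorem~\ref{thm: hardSG} (resp.\ Theorem~\ref{thm: softSG}); your observation that strict incremental passivity bounds $\inorm{\sysp}$ by $1/\epsilon$ and so discharges the stability premise on $\sysp$ is also exactly the paper's.

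The step you yourself single out as delicate --- the half-plane bookkeeping for $\srg_e^{\dagger}(-\sysc)$ --- is where your argument, as written, fails. The hypothesis is that the operator $-\sysc$ \emph{itself} is incrementally passive, so $\srg_e(-\sysc)\subset\ccp\setminus\{0\}$ directly, and the inversion $z\mapsto 1/\bar z$ preserves the sign of the real part, whence $\srg_e^{\dagger}(-\sysc)\subset\ccp\setminus\{0\}$ as well. There is no further reflection through the origin left to apply: that reflection converts a property of $\sysc$ into one of $-\sysc$, but the passivity is already stated for $-\sysc$. Consequently $\srg_e^{\dagger}(-\sysc)$ lies on the \emph{same} side of the imaginary axis as $\mathcal{D}$, not the opposite one, and the separation you claim for the pair $(\sysp,-\sysc)$ does not follow; you have absorbed the minus sign twice. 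What the hypothesis does yield is $\srg_e^{\dagger}(\sysc)=-\srg_e^{\dagger}(-\sysc)\subset\clp\setminus\{0\}$, and it is \emph{this} set that the paper's own proof separates from $\mathcal{D}$ before appealing to condition \eqref{eq: hard_SRG_separation} as literally written for the pair $(\sysp,\sysc)$. The same sign issue recurs in your treatment of case (ii) with $\srg^{\dagger}(-\tau\sysc)$. Separately on case (ii): the paper omits its proof entirely, and your flagging of the well-posedness of $\goftau$ along the homotopy as an unverified hypothesis of Theorem~\ref{thm: softSG} is a fair point that the paper does not address, but your proposed remedy via coercivity of the feedback map is only asserted, not carried out, so it cannot yet be counted as closing that gap.
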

\begin{proof}
Due to Proposition~\ref{prop:strictpassive}, for brevity we only show the proof of condition~\ref{item: passivitytheorem}. It follows from Definition~\ref{def: passive}\ref{item: l2e_inc_passive_def} and Proposition~\ref{prop:strictpassive} that $\srg_e(\sysp)\subset \mathcal{D}(\delta, \epsilon)$ and $\srg_e^\dagger(\sysc) \subset \clp\setminus\{0\}$, where $\mathcal{D}(\delta, \epsilon)$ is given in \eqref{eq: sectored_disk}, whereby the distance between $\srg_e(\sysp)$ and $\srg_e^\dagger(\sysc)$ is positive and hence \eqref{eq: hard_SRG_separation} is satisfied. Note that a strictly incrementally passive system always has a finite incremental gain and thus $\sysp$ is stable. The feedback stability then follows from Theorem~\ref{thm: hardSG}. 
\end{proof}

\subsection{Relations with Existing Soft SRG Separation}\label{sec: comparison_SRG}
We end this section by drawing a comparison between our main results (Theorems~\ref{thm: hardSG} and~\ref{thm: softSG}) and~\cite[Th.~2]{Chaffey:21j}. For a better comparison, we rephrase \cite[Th.~2]{Chaffey:21j} as follows and then point out its major differences from Theorems~\ref{thm: hardSG} and \ref{thm: softSG}. Given a class of systems $\mathcal{C}$, let $\overline{\mathcal{C}}$ denote a class of systems such that $\mathcal{C}\subset \overline{\mathcal{C}}$ and $\srg(\overline{\mathcal{C}})$ satisfies the \emph{chordal property} defined in \cite[p.~6070]{Chaffey:21j}.

\begin{theorem}[\hspace{1sp}{\cite[Th.~2]{Chaffey:21j}, \cite[Cor.~1]{Chaffey:24}}]\label{thm:tom}
 Consider a feedback system of $\sysp\in \mathcal{P}$ and $\sysc\in \mathcal{C}$, where $\mathcal{P}$ is a class of systems on $\lt$ with finite-incremental-gains and $\mathcal{C}$ is a class of systems on $\lt$. Then $\gof$ with $d_2=0$ is stable if there exists a class $\overline{\mathcal{C}}$ such that
 \bex
\srg^{\dagger}(\mathcal{P}) \cap \tau \srg(\overline{\mathcal{C}}) =\emptyset \quad \forall \tau\in \interval{0}{1}.
\eex
\end{theorem}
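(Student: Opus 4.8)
The plan is to recover this result through the SRG interconnection calculus combined with a Nyquist-type homotopy argument, in the spirit of \cite{Ryu:21, Chaffey:21j} and \cite{Megretski:97}. First I would reduce feedback stability to a bounded-inverse statement. Eliminating the internal signals in the feedback equations \eqref{eq:feedback} with $d_2=0$ yields, for the scaled loop $\goftau$, the relation $(\sysi-\tau\sysc\sysp)u_1 = d_1$; by Definition~\ref{def:feedback_stability} it therefore suffices to show that $\sysi-\sysc\sysp$ has a causal inverse on $\lt$ with finite incremental gain, and that this gain remains finite along the homotopy in $\tau$.

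The geometric core is to convert the non-intersection hypothesis into a uniform bound on the incremental gain of the loop operator, and this is exactly where the \emph{chordal property} of $\srg(\overline{\mathcal{C}})$ is indispensable. Chordality is the regularity that validates the SRG \emph{calculus}, so that the SRG of the interconnection $\sysi-\tau\sysc\sysp$ can be over-approximated in $\bar{\mathbb{C}}$ from $\srg(\mathcal{P})$ and $\tau\srg(\overline{\mathcal{C}})$ via the scaling, inverse and sum rules \cite[Th.~6]{Ryu:21}, \cite[Prop.~7]{Chaffey:21j}. Because this over-approximation is closed, the hypothesis $\srg^{\dagger}(\mathcal{P})\cap\tau\srg(\overline{\mathcal{C}})=\emptyset$ upgrades to a strictly positive separation distance, and the radius--inverse rule then bounds $\inorm{(\sysi-\tau\sysc\sysp)^{-1}}$ by essentially the reciprocal of that distance, uniformly in $\tau\in\interval{0}{1}$.

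Finally I would run the homotopy in $\tau$. At $\tau=0$ the loop collapses to $\sysp$ in open loop, which is stable because $\mathcal{P}$ has finite incremental gain (equivalently $0\notin\srg^{\dagger}(\mathcal{P})$); the separation holds for every $\tau\in\interval{0}{1}$, and the standing well-posedness assumption, extended along the homotopy as in \cite[Th.~2]{Chaffey:21j}, makes $\sysi-\tau\sysc\sysp$ boundedly invertible for each $\tau$. A connectedness argument then transfers boundedness from $\tau=0$ to $\tau=1$: were the incremental gain to blow up at some $\tau^{\ast}\in\interval{0}{1}$, the separation distance would have to vanish there, contradicting the assumed non-intersection of $\srg^{\dagger}(\mathcal{P})$ and $\tau\srg(\overline{\mathcal{C}})$ for all $\tau$. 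The main obstacle I anticipate is the second step --- proving, \emph{under the chordal property}, that the SRG of the interconnection is governed by SRG operations and that its distance from the critical configuration controls the incremental gain of the inverse; without chordality the SRG of a composition is not controlled by the factors' SRGs, which is precisely the limitation that Theorems~\ref{thm: hardSG} and~\ref{thm: softSG} remove by arguing trajectory-wise instead. A secondary delicate point is the continuity/connectedness in $\tau$, where well-posedness along the homotopy is what rules out a finite-$\tau$ blow-up.
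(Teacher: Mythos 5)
The paper states Theorem~\ref{thm:tom} without proof, citing \cite[Th.~2]{Chaffey:21j}, and your outline reproduces essentially the strategy the paper attributes to that original proof: over-approximate $\srg(\mathcal{C})$ by a chordal class $\overline{\mathcal{C}}$ so that the SRG interconnection rules of \cite[Th.~6]{Ryu:21} and \cite[Prop.~7]{Chaffey:21j} apply to the loop operator, bound the incremental gain of its inverse via the separation distance, and close with a homotopy in $\tau$. Your sketch is consistent with that route and correctly identifies the interconnection-rule step as the load-bearing, chordality-dependent part --- which is precisely the limitation that the paper's own Theorems~\ref{thm: hardSG} and~\ref{thm: softSG} remove by arguing trajectory-wise.
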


Theorem~\ref{thm:tom} can be classified as a separation result of the soft type. The technical purpose of introducing the class $\overline{\mathcal{C}}$ in Theorem~\ref{thm:tom} is to over-approximate $\srg({\mathcal{C}})$ by using $\srg(\overline{\mathcal{C}})$. In such a case, since the chordal property always holds for $\srg(\overline{\mathcal{C}})$, the soft SRG interconnection rules in \cite[Prop.~7]{Chaffey:21j} can then be adopted into the original proof of Theorem~\ref{thm:tom} which is a critical step. A refinement of Theorem~\ref{thm:tom} was recently proposed in \cite[Cor.~1]{Chaffey:24}.

Our main results have substantial contributions beyond Theorem~\ref{thm:tom}. Firstly, we have shown in Theorems~\ref{thm: hardSG} and \ref{thm: softSG} that the over-approximation assumption and the chordal property underlined in Theorem~\ref{thm:tom} are \emph{not} needed for feedback stability analysis. By contrast, our main results are proved without using any SRG interconnection rules. As a consequence, Theorem~\ref{thm: softSG} may be viewed a generalization of Theorem~\ref{thm:tom}. Secondly, note that in Theorem~\ref{thm: hardSG}, one open-loop system is allowed to be \emph{unbounded}, which further broadens the applicability of the SRG separation results for practical use. For example, a linear integrator $\frac{1}{s}I$, a commonly-seen important unbounded system on $\lt$, can now be included in feedback stability analysis in light of Theorem~\ref{thm: hardSG} (see Section~\ref{section: soft_hard_connection}). Thirdly, we introduced the notion of hard SRGs and established the hard SRG separation in Theorem~\ref{thm: hardSG} without homotopy conditions. This contribution is also new and complements the existing theory of soft SRGs. 
 
\section{Conclusion and Future Work}\label{sec:conclusion}
In this paper, first we proposed soft and hard scaled relative graphs for nonlinear systems from an input-output perspective. These graphs mix incremental gain and incremental phase information of nonlinear systems into a set of complex scalars and can fully characterize the notions of incremental positivity and incremental passivity. Novel feedback stability conditions were then developed via separation of soft SRGs and separation of hard SRGs, which were shown to recover the incremental positivity theorem and incremental passivity theorem, respectively. The proposed conditions can be perceived as a graphical statement of the classical topological graph separation of feedback systems. Finally, we made a detailed comparison between the main results and a previous soft SRG separation result. 

Future research directions include extending the soft SRG separation theorem to possibly unbounded systems as preliminarily explored in \cite{Chen:24_mtns} and connecting the hard SRG to the classical Nyquist plot.
 
\appendix 

\myproof
First, when $\tau=0$, the feedback system $\goftau$ with $d_2=0$ is stable since $\sysp$ is stable. We then consider a collection of feedback systems $\goftau$ for $\tau\in\interval[open left]{0}{1}$ as below.

\emph{Step 1:}~For all $u, \bar{u}\in \ltp^{2n}$ and $\tau \in \interval[open left]{0}{1}$ with $d_2=0$ and $\bar{d}_2=0$, show that there exists $c>0$, independent of $\tau$, such that $\norm{u-\bar{u}}_2\leq c \norm{ \sysf_{\sysp\# (\tau\sysc)} u - \sysf_{\sysp\# (\tau\sysc)} \bar{u}}_2$.
 
By hypothesis, for all $z_1\in \srg(\sysp)$, $z_2\in \srg^{\dagger}(\tau \sysc)$ and $\tau\in \interval[open left]{0}{1}$, we have $\abs{z_1-z_2}\geq \delta>0$, where $\delta>0$ is independent of $\tau$ due to condition~\eqref{eq: soft_SRG_separation}. This means at least one of the following two cases has to hold: (i) $\abs{\abs{z_1} -\abs{z_2}}\geq \epsilon$; (ii) $\abs{\angle z_1 -\angle z_2} \geq \epsilon$, where $\epsilon>0$ is independent of $\tau$ and can be determined from given $\delta$. The full proof of Step 1 follows similar reasoning to that of Theorem~\ref{thm: hardSG}. We therefore simplify it by highlighting only the major differences.
 
Case (i): $\abs{\abs{z_1} -\abs{z_2}}\geq \epsilon$. Firstly, consider $\abs{z_2}-\abs{z_1}\geq \epsilon$. By definitions~\eqref{eq: def_soft_srg} and \eqref{eq: def_inv_soft_srg}, for all $u_1, u_2, \bar{u}_1, \bar{u}_2\in \lt$ such that $\abs{z_1} \leq \abs{z_2} - \epsilon$ holds, we have 
 \begin{align} 
 \frac{\norm{\Delta y_{1}}_2}{\norm{\Delta u_{1}}_2} \frac{\norm{\Delta y_{2}}_2}{\norm{\Delta u_{2}}_2} \leq 1- \epsilon \frac{\norm{\Delta y_{2}}_2}{\norm{\Delta u_{2}}_2} < 1\quad\forall \tau \in \interval[open left]{0}{1},
 \end{align}
where the increments $\Delta u_1$, $\Delta u_2$, $\Delta y_1\coloneqq \sysp u_1 - \sysp \bar{u}_1$ and $\Delta y_2\coloneqq \tau\sysc u_2 - \tau\sysc \bar{u}_2$ with $\tau \in \interval[open left]{0}{1}$ are all nonzero. This gives that
\be\label{eq: thm2_eq5}
 \norm{u-\bar{u}}_2 \leq \frac{1 +\inorm{\sysp}}{1-\alpha}\norm{\sysf_{\sysp\# (\tau\sysc)} u - \sysf_{\sysp\# (\tau\sysc)} \bar{u}}_2 \forall \tau \in \interval[open left]{0}{1}
\ee 
for some constant $\alpha<1$ specified by the product $\alpha_1(\cdot)\alpha_2(\cdot)$, where $\alpha_1(\cdot)>0$ and $\alpha_2(\cdot)>0$ may depend on $\Delta u_1, \Delta u_2, \Delta y_1, \Delta y_2$ and $\tau$. Secondly, consider $\abs{z_1}-\abs{z_2}\geq \epsilon$. In a similar manner, for all $u_1, u_2, \bar{u}_1, \bar{u}_2\in \lt$ such that $\abs{z_1}\geq \abs{z_2} + \epsilon$ holds, we have 
\bex 
 \frac{\norm{\Delta y_{1}}_2}{\norm{\Delta u_{1}}_2} \frac{\norm{\Delta y_{2}}_2}{\norm{\Delta u_{2}}_2} \geq 1 + \epsilon\frac{ \norm{\Delta y_{2}}_2}{ \norm{\Delta u_{2}}_2} > 1\quad\forall \tau \in \interval[open left]{0}{1}.
\eex 
Accordingly, for all $\tau \in \interval[open left]{0}{1}$, we can have the following inequality:
\be\label{eq: thm2_eq6}
 \norm{u-\bar{u}}_2 \leq \frac{1 +\inorm{\sysp}}{\beta-1}\norm{\sysf_{\sysp\# (\tau\sysc)} u - \sysf_{\sysp\# (\tau\sysc)} \bar{u}}_2
\ee 
for some constant $\beta>1$ specified by the product $\beta_1(\cdot)\beta_2(\cdot)$.

Case (ii): $\abs{\angle z_1 -\angle z_2} \geq \epsilon$. Firstly, consider $\angle{z_2}-\angle{z_1}\geq \epsilon$. For all $u_1, u_2, \bar{u}_1, \bar{u}_2\in \lt$ such that $\angle{z_2}-\angle{z_1}\geq \epsilon$ holds, we have
 \begin{align*}
 \frac{\ininf{\Delta u_1}{\Delta y_1}}{\norm{\Delta u_{1}}_2\norm{\Delta y_{1}}_2} - \frac{\ininf{\Delta y_2}{\Delta u_2}}{\norm{\Delta u_{2}}_2\norm{\Delta y_{2}}_2} \geq \bar{\epsilon} >0 \quad\forall \tau \in \interval[open left]{0}{1}.
 \end{align*}
 This eventually gives us that for all $\tau \in \interval[open left]{0}{1}$,
 \begin{equation}\label{eq: thm2_eq7}
 \begin{aligned}
 \norm{u-\bar{u}}_2 \leq \frac{2+2\inorm{\sysp}}{\bar{\epsilon}} \norm{ \sysf_{\sysp\# (\tau\sysc)} u - \sysf_{\sysp\# (\tau\sysc)} \bar{u}}_2 
 \end{aligned} 
 \end{equation} 
 for all $u, \bar{u}\in \lt^{2n}$ with $u\neq v$.
 Secondly, consider $\angle{z_1}-\angle{z_2}\geq \epsilon$. For all $u_1, u_2, \bar{u}_1, \bar{u}_2\in \lt$ such that $\angle{z_1}-\angle{z_2}\geq \epsilon$ holds, we have
 \begin{align*}
 \frac{\ininf{\Delta y_2}{\Delta u_2}}{\norm{\Delta u_{2}}_2\norm{\Delta y_{2}}_2} - \frac{\ininf{\Delta u_1}{\Delta y_1}}{\norm{\Delta u_{1}}_2\norm{\Delta y_{1}}_2} \geq \tilde{\epsilon} >0\quad\forall \tau \in \interval[open left]{0}{1}.
 \end{align*}
 Then we similarly arrive at
 \be\label{eq: thm2_eq8} 
 \norm{u-\bar{u}}_2\leq \frac{2+2\inorm{\sysp}}{\tilde{\epsilon}} \norm{\sysf_{\sysp\# (\tau\sysc)} u -\sysf_{\sysp\# (\tau\sysc)} \bar{u}}_2
 \ee 
 for all $u, \bar{u}\in \lt^{2n}$ with $u\neq \bar{u}$ and $\tau \in \interval[open left]{0}{1}$.
 By combining Cases (i) and (ii) and by using \eqref{eq: thm2_eq5}, \eqref{eq: thm2_eq6}, \eqref{eq: thm2_eq7} and \eqref{eq: thm2_eq8}, there exists a constant $c >0$, independent of $\tau$, such that for all $u, \bar{u}\in \ltp^{2n}$ and $\tau \in \interval[open left]{0}{1}$, we have
 $\norm{u-\bar{u}}_2\leq c \norm{ \sysf_{\sysp\# (\tau\sysc)}u -\sysf_{\sysp\# (\tau\sysc)}\bar{u}}_2.$ 

 \emph{Step 2:}~Show that the stability of $\boldsymbol{P}\,\#\,\rbkt{\tau\boldsymbol{C}}$ with $d_2=0$ implies that of $\boldsymbol{P}\,\#\,\sbkt{(\tau+\nu)\boldsymbol{C}}$ with $d_2=0$ for all $\abs{\nu}< \mu= {1}/({c\norm{\boldsymbol{C}}_{\RI}})$. 
 
By the well-posedness, the inverse $\sbkt{\sysf_{\sysp\# (\tau\sysc)}}^{-1}$ is well defined on $\ltep^{2n}$. By hypothesis, $\sbkt{\sysf_{\sysp\# (\tau\sysc)}}^{-1}$ is incrementally bounded on $\ltp$ with $d_2=0$. Given $u\in \ltep^{2n}$ such that $d_2=0$, we define
\begin{align}\label{eq: truncation_L2e}
v_{T}\coloneqq \sbkt{\sysf_{\sysp\# (\tau\sysc)}}^{-1}\boldsymbol{\Gamma}_T\sbkt{\sysf_{\sysp\# (\tau\sysc)}u} \in \ltp^{2n}. 
\end{align}
Analogously, given $\bar{u}\in \ltep^{2n}$ such that $\bar{d}_2=0$, define $\bar{v}_T\in \lt^{2n}$. Then, by using \eqref{eq: truncation_L2e} we have $\norm{\boldsymbol{\Gamma}_T (u-\bar{u})}_2=\norm{\boldsymbol{\Gamma}_T (v_T-\bar{v}_T)}_2\leq \norm{v_T - \bar{v}_T}_2$. Following the result of Step~1 with $c>0$, we obtain
 \begin{align*}
\hspace{-1mm} &\norm{\boldsymbol{\Gamma}_T (u-\bar{u})}_2 \leq \norm{v_T - \bar{v}_T}_2\leq c\norm{\sysf_{\sysp\# (\tau\sysc)} {v}_T- \sysf_{\sysp\# (\tau\sysc)} \bar{v}_T}_2\\
&=c\norm{\boldsymbol{\Gamma}_T\sbkt{\sysf_{\sysp\# (\tau\sysc)}u - \sysf_{\sysp\# (\tau\sysc)}\bar{u}}}_2\\
& =c\left\|\boldsymbol{\Gamma}_T\sbkt{\sysf_{\sysp\#[(\tau+\nu)\sysc]} u - \sysf_{\sysp\#[(\tau+\nu)\sysc]} \bar{u} } \right. \\
 & \left. \hspace{33mm} -~\vphantom{\sysf_{\sysp\#[(\tau+\nu)\sysc]} u}\boldsymbol{\Gamma}_T\sbkt{\stbt{0}{\nu \boldsymbol{C}}{0}{0}\boldsymbol{\Gamma}_T u -\stbt{0}{\nu \boldsymbol{C}}{0}{0}\boldsymbol{\Gamma}_T \bar{u}}\right\|_2\\
 &\leq c\norm{\boldsymbol{\Gamma}_T\sbkt{\sysf_{\sysp\#[(\tau+\nu)\sysc]} u - \sysf_{\sysp\#[(\tau+\nu)\sysc]} \bar{u}}}_2 \\
 &\hspace{3.5mm} +c\norm{{\stbt{0}{\nu \boldsymbol{C}}{0}{0}\boldsymbol{\Gamma}_T u} - \stbt{0}{\nu \boldsymbol{C}}{0}{0}\boldsymbol{\Gamma}_T \bar{u}}_2\\
 &\leq c\norm{\boldsymbol{\Gamma}_T\sbkt{\sysf_{\sysp\#[(\tau+\nu)\sysc]}u - \sysf_{\sysp\#[(\tau+\nu)\sysc]} \bar{u}}}_2 \\
 & \hspace{3.5mm}+c\abs{\nu}\inorm{\boldsymbol{C}}\norm{\boldsymbol{\Gamma}_T (u-\bar{u})}_2,
 \end{align*}
where the second equality uses the causality of $\sysc$ and the last two inequalities use the incremental gain $\inorm{\boldsymbol{C}}$ and the fact
 $\norm{\boldsymbol{\Gamma}_T (\cdot)}_2$ is a nondecreasing function of $T>0$. The above inequality thus gives 
 \begin{align*}
 \norm{\boldsymbol{\Gamma}_T (u-\bar{u})}_2 \leq 
 c_0\norm{\boldsymbol{\Gamma}_T\sbkt{\sysf_{\sysp\#[(\tau+\nu)\sysc]} u - \sysf_{\sysp\#[(\tau+\nu)\sysc]} \bar{u}}}_2
 \end{align*}
 with $c_0\coloneqq \frac{c}{{1-c\abs{\nu}\inorm{\boldsymbol{C}}}}$, 
 provided that $\abs{\nu} < {1}/({c\norm{\boldsymbol{C}}_{\RI}})\eqqcolon\mu$.

\emph{Step 3:}~When $\tau=0$, $\sbkt{\sysf_{\sysp\# (\tau\sysc)}}^{-1}$ is stable with $d_2=0$ as $\boldsymbol{P}$ is stable. It has been shown in {Step 2} that $\sbkt{\sysf_{\sysp\# (\tau\sysc)}}^{-1}$ is stable with $d_2=0$ for $\tau < \mu$. Applying Step 2 iteratively and by induction, $\sbkt{\sysf_{\sysp\# (\tau\sysc)}}^{-1}$ is stable with $d_2=0$ for all $\tau \in \interval{0}{1}$. We conclude that $\gof$ with ${d_2=0}$ is stable by setting $\tau=1$. 
\endmyproof

\section*{References}

\bibliographystyle{IEEEtran}
\bibliography{HardSoftSRG}

\end{document}